\newtheorem{proposition}{\em Proposition}
\newtheorem{theorem}{\em Theorem}
\newtheorem{definition}{\em Definition}
\newtheorem{lemma}{\em Lemma}
\newtheorem{corollary}{\em Corollary}
\newtheorem{question}{\em Question}
\newcommand{\N}{\mathbb{N}}
\newcommand{\card}[1]{\vert #1 \vert}
\DeclareMathOperator*{\argmax}{arg\,max}
\journal{Sample Journal}
\begin{document}

\begin{frontmatter}

\title{On the fixed-parameter tractability of the maximum connectivity improvement problem}

\author[label1]{Federico Cor\`o\corref{cor1}}
\address[label1]{Gran Sasso Science Institute, L'Aquila, Italy}

\cortext[cor1]{Corresponding author}

\ead{federico.coro@gssi.it}

\author[label1]{Gianlorenzo D'Angelo}
\ead{gianlorenzo.dangelo@gssi.it}

\author[label1]{Vahan Mkrtchyan}
\ead{vahan.mkrtchyan@gssi.it}


\begin{abstract}
In the Maximum Connectivity Improvement (MCI) problem, we are given a directed graph $G=(V,E)$ and an integer $B$ and we are asked to find $B$ new edges to be added to $G$ in order to maximize the number of connected pairs of vertices in the resulting graph. 
The MCI problem has been studied from the approximation point of view. In this paper, we approach it from the parameterized complexity perspective in the case of directed acyclic graphs. 
We show several hardness and algorithmic results with respect to different natural parameters. 
Our main result is that the problem is $W[2]$-hard for parameter $B$ and it is FPT for parameters $|V| - B$ and $\nu$, the matching number of $G$. 
We further characterize the MCI problem with respect to other complementary parameters.

\end{abstract}

\begin{keyword}
Graph augmentation \sep connectivity \sep parameterized complexity 
\end{keyword}

\end{frontmatter}

\section{Introduction}
Graph augmentation problems under connectivity requirements are fundamental topics in algorithmic research. Given a graph, these problems ask to add a limited number of new edges to it in order to meet some connectivity requirements, like, for example, strong connectivity of a directed graph~\cite{eswaran1976augmentation}.

While graph augmentation problems have been widely investigated (see~\cite{BG08,frank2011connections}), their study from a parameterized complexity perspective is still at the beginning.
%
%
Marx and V\'egh~\cite{MV15} proved that the problem of increasing the edge-connectivity of an undirected graph from $k-1$ to $k$ by adding at most $B$ edges is fixed-parameter tractable when parameterized by $B$. 
They proposed an algorithm with complexity $2^{O(B\log B)}|V|^{O(1)}$. 
This bound on the running time was then improved by Basavaraju et al.~\cite{BFGMRS14}, that provided an algorithm that runs in $9^B |V|^{O(1)}$ time. 
Marx and V\'egh also proved the fixed-parameter tractability of increasing edge-connectivity from 0 to 2 and increasing vertex-connectivity from 1 to 2~\cite{MV15}. 

Gao et al.~\cite{GHN13}, showed that the problem of adding $B$ edges to a graph in order to have a diameter equal to $t$, for some input $t$, is $W[2]$-hard with respect to parameter $B$, for every $t$. 
Hoffmann et al.~\cite{HMS17} proved that the problem of maximizing the closeness or the betweenness centrality of a graph by adding $B$ new edges is $W[2]$-hard with respect to parameter $B$ and give FPT algorithms for a parameter that measures the distance to cluster graphs.

A related class of problems asks to select a set of edges in a graph in order to maintain some connectivity requirements when these edges are removed.
%
A general version of this problem is the so called \emph{Survivable Network Design Problem} (SND), in which we are given an edge-weighted (directed or undirected) graph, a connectivity requirement function $r:V\times V\rightarrow \mathbb{N}$, and two numbers, $B$ and $W$. 
The aim is to remove a set of edges of size at most $B$ and overall cost at least $W$ in such a way that the resulting graph has at least $r(u,v)$ edge-disjoint (or vertex-disjoint) paths for every $(u,v)\in V\times V$. 
The SND problem is $W[1]$-hard in directed graphs for both vertex-disjoint and edge-disjoint versions, even in the case of uniform weights~\cite{BBKMRS18}.
The case of \emph{uniform demands}, in which function $r$ is a constant and the aim is to preserve edge-connectivity or vertex-connectivity of $k$-connected graphs, admits FPT algorithms in several interesting cases.
Basavaraju et al.~\cite{BFGMRS14} gave a $2^{O(B)}|V|^{O(1)}$ time algorithm for the unweighted version of the edge-connectivity problem in undirected graphs.
%
Bang-Jensen et al.~\cite{BBKMRS18} considered the problem of maintaining the edge-connectivity or vertex-connectivity of both directed and undirected weighted graphs.
They provided a $2^{O(B\log (B+k))}|V|^{O(1)}$ time algorithm for the case of edge-connectivity of both directed and undirected graphs and an algorithm with same running time for the case of vertex-connectivity in directed graphs. 
Gutin et al.~\cite{GRRW19} considered this latter problem in undirected graphs and showed that it admits a $2^{O(B\log B)}|V|^{O(1)}$ time algorithm in the case of biconnected graphs (i.e., $k=2$).

Several optimization problems related to graph augmentation have been studied in the field of approximation algorithms. 
For the problem of computing a minimum cost set of new edges to add to an existing $k$-connected graph in order to increase the connectivity to $k+1$ we refer to~\cite{N18} and references therein. 
For the SND problem, we refer to the survey by Kortsarz and Nutov~\cite{KN07} and a more recent paper by Cheriyan and V\'egh~\cite{CV14}.

%
%
The problem of minimizing the average all-pairs shortest path distance of a graph has been studied by Papagelis~\cite{papagelis2015refining}.
The problem of adding a small set of links in order to maximize the centrality of a given vertex in a network  has been addressed for different centrality measures: page-rank~\cite{AN06,OV14}, eccentricity~\cite{DZ10}, average distance~\cite{MT09}, harmonic and betweenness centrality~\cite{BCDMSV18,CDSV15,CDSV16}, and coverage centrality~\cite{DOS19}.

In this paper we focus on the \emph{Maximum Connectivity Improvement} (MCI) problem, which consists in adding at most $B$ edges to a directed graph $G=(V,E)$ in order to maximize the number of pairs of vertices $(u,v)\in V\times V$ such that $v$ is reachable from $u$ in the augmented graph. 
This problem is a maximization version of the \emph{Strong Connectivity Augmentation} (SCA) problem which asks to add a minimum number of edges to a directed graph in order to make the resulting augmented graph strongly connected~\cite{eswaran1976augmentation}. 
While the SCA problem can be solved in linear time~\cite{eswaran1976augmentation}, the MCI problem is $\mathit{NP}$-hard even in restricted case of Directed Acyclic Graphs (DAGs) with only one source vertex or only one sink vertex~\cite{coroAlgosensor2018}. 

The MCI problem has been defined in~\cite{coroAlgosensor2018}, where it has been studied from an approximation point of view. The authors show that the problem is $\mathit{NP}$-hard to approximate within some constant factor and propose an algorithm that matches this factor in the case of DAGs with only one sink or only one source. Moreover, they give a polynomial time exact algorithm for the case of trees with a single source or a single sink.


In this paper we address the MCI problem on DAGs from a parameterized complexity point of view by using many natural parameters.

In parameterized complexity each problem instance comes with a parameter $k$. 
A parameterized problem that can be solved exactly in $\mathcal{O}(f(k)n^c)$ time is said to be \emph{Fixed-Parameter Tractable} (FPT). 
Above FPT, there exists a hierarchy of complexity classes, known as the W-hierarchy.
Just as $\mathit{NP}$-hardness is used as evidence that a problem is probably not polynomial time solvable, showing that a parameterized problem is hard for one of these classes gives evidence to the belief that the problem is unlikely to be fixed parameter tractable. 
The principal analogue of the classical intractability class $\mathit{NP}$ is $\mathit{W}[1]$. 
In particular, this means that an FPT algorithm for any $\mathit{W}[1]$-hard problem would yield an $\mathcal{O}(f(k)n^c)$ time algorithm for every problem in the class $\mathit{W}[1]$.

In this paper, we first prove, by modifying a reduction given in~\cite{coroAlgosensor2018}, that it is hard to find a FPT algorithm for MCI on DAGs with respect to several parameters. 
In detail, we show that MCI is $W[2]$-hard with respect to the number of added edges $B$ and it remains $\mathit{NP}$-hard if one of the following parameters are constant numbers: the sum of maximum in-degree and out-degree $\Delta$, the number of sources $|S|$ (or the number of sinks $|T|$) plus the number of isolated vertices $|Q|$, the difference between the size of the minimum vertex cover $\tau$ and that of the largest independent set $\nu$ in the underlying undirected graph. Similar conclusions hold for the chromatic number ($\chi$) and the size of the largest clique ($\omega$).

On the positive side, we show that the problem is FPT with respect to other (somewhat complementary) parameters, namely: the value of an optimal solution $OPT$, $|S|+|T|$, $|V|-B$, $|V|^2 - OPT$, $\nu$, and $\tau$. 
While it is easy to see that simple brute-force algorithms solve MCI in FPT time with respect to $OPT$, $|S|+|T|$, and $|V|^2 - OPT$, the algorithms for parameters $|V|-B$, $\nu$, and $\tau$ require further arguments. 
Our main results consist in algorithms that run in time $2^{O((|V|-B)\log (|V|-B))}|V|^{O(1)}$ and $2^{2^{O(\nu)}}|V|^{O(1)}$. 
This latter algorithm implies also an FPT algorithm for parameter $\tau$, since $\tau\geq \nu$ in any graph.

Our results are summarized in Table~\ref{tbl:results}.
It is worth to observe that the MCI problem exhibits different complexity if parameterized with respect to a parameter or to its complement. 
For example, MCI is $W[2]$-hard with respect to $B$ but it is FPT with respect to $|V|-B$.~\footnote{One can assume that $|V|$ is an upper bound for $B$ as otherwise the graph can be made strongly connected (see Theorem~\ref{thm:Tarjan}).} 
Similarly, it is solvable in FTP time with respect to both $\tau$ and $\nu$, while it is $\mathit{NP}$-hard when $\tau-\nu$ is a constant (note that in any graph $\tau-\nu\leq \nu\leq \tau$). 
In contrast, MCI is FPT with respect to both $OPT$ and $|V|^2 - OPT$ ($|V|^2$ is an upper bound to $OPT$).
Finally, if at most one between $|S|$ and $|T|$ is a constant, then the problem remains $\mathit{NP}$-hard, while it is FPT with respect to $|S|+|T|$. 


\begin{table}[t]
\centering 
\begin{tabular}{cc||cc}
\multicolumn{1}{c}{Parameter} & Result             & \multicolumn{1}{c}{Parameter} & Result             \\ \hline
$OPT$                          & FPT                & $|V| - B$                        & FPT                \\
$B$                            & $\mathit{W}[2]$-hard        & $|V|^2 - OPT$                    & FPT                \\
$\Delta$                       & $\mathit{NP}$-hard & $\nu$, $\tau$                  & FPT                \\
$|S|+|Q|$, $|T|+|Q|$               & $\mathit{NP}$-hard & $\tau - \nu$                   & $\mathit{NP}$-hard \\
$|S| + |T|$, $\max\{|S|,|T|\}$                   & FPT                &        $\chi,\omega$                        &       $\mathit{NP}$-hard            
\end{tabular}
\caption{Results in this paper.}
\label{tbl:results}
\end{table}



\section{Problem statement and preliminaries}

Let $G = (V, E)$ be an unweighted DAG (Directed Acyclic Graph). Given two vertices $u,v \in V$, we say that $u$ is reachable from $v$ in $G$ if there is a directed path from $v$ to $u$.\\
Akin to that defined in~\cite{coroAlgosensor2018}, our objective is to augment the graph $G$ by adding a set $N$ of edges of at most size $B$, i.e., $|N| \le B$ and  $B\in \N_{\ge 0}$, that maximizes the connectivity of $G$.
Let $f(G)= \sum_{v \in V} |\{ u \in V : \exists \text{ path from $v$ to $u$ in } G\}|$ and $G(N)=(V,E\cup N)$, we formally define the following optimization problem:~\footnote{We can equivalently define MCI as a decision problem without affecting (up to a poly-logarithmic factor) the complexity of the algorithms given in this paper.} 

\begin{definition}[Maximum Connectivity Improvement (MCI)]
Given a DAG $G=(V, E)$ and a budget $B\in \N_{\ge 0}$.
We want to add a set of edges $N^* \subseteq \Gamma = (V \times V) \setminus E$, with $|N^*| \leq B$, such that $f(G(N))$ is maximized. 
That is
\[
    N^* = \argmax_{ N \subseteq \Gamma: |N|\leq B } f(G(N))
\]
\end{definition}

Given a DAG, a vertex with no incoming edges ans at least one outgoing edge is called a \emph{source}, while a vertex with no outgoing edges and at least one incoming edge is called a \emph{sink}. 
In the remainder of the paper we denote with $S$ the set of sources, with $T$ the set of sinks, and with $Q$ the set of isolated vertices in $G$ (i.e. before adding edges). Let us explicitly note that $Q$ is disjoint with $S$ or $T$. 
Finally, we denote with $OPT$ (or sometimes $OPT(G,B)$ in order to make the graph and the budget explicit) and $ALG$ the value, respectively, of an optimal solution and the solution find by the algorithm we are considering.
 We also point out a result by Tarjan et al.~\cite{eswaran1976augmentation} that we will use frequently:
\begin{theorem}
[\cite{eswaran1976augmentation}]\label{thm:Tarjan} Let $G$ be a non-trivial DAG and let $B$ be a positive integer. Then $G$ can be made strongly connected by adding at most $B$ edges ($OPT(G, B)=|V|^2$), if and only if $B\geq \max\{|S|,|T|\}+|Q|$. In the latter case, these edges can be found in polynomial time.
\end{theorem}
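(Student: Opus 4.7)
The plan is to prove the biconditional's two directions separately. First, note that $OPT(G,B)=|V|^2$ is equivalent to $G(N)$ being strongly connected, since $f$ counts ordered pairs $(v,u)$ with $u$ reachable from $v$, and this count reaches its maximum value $|V|^2$ precisely when every ordered pair is mutually reachable.

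For the necessity direction I would observe that in any strongly connected digraph every vertex has both in-degree and out-degree at least $1$. In $G$, every vertex of $S\cup Q$ has in-degree $0$ and every vertex of $T\cup Q$ has out-degree $0$, so the $B$ added arcs must collectively supply at least one in-arc to each of the $|S|+|Q|$ vertices of $S\cup Q$ and at least one out-arc from each of the $|T|+|Q|$ vertices of $T\cup Q$. Because every arc contributes exactly one head and one tail, $B \geq |S|+|Q|$ and $B \geq |T|+|Q|$, hence $B \geq \max\{|S|,|T|\}+|Q|$.

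For the sufficiency direction I would follow the polynomial-time constructive procedure of Eswaran and Tarjan~\cite{eswaran1976augmentation}. Assume without loss of generality that $|S|\leq|T|$ (otherwise reverse all arcs and apply the same argument to the reversed DAG); the target bound is then $|T|+|Q|$. At a high level, the algorithm proceeds as follows: (i)~it pairs each sink with a source using a matching-style computation on the bipartite source-to-sink reachability relation between $S$ and $T$, splitting the $|T|$ sinks into ``matched'' sinks (each paired with a distinct source) and $|T|-|S|$ ``excess'' sinks; (ii)~it interleaves the vertices of $Q$ into this structure, each receiving one new in-arc and one new out-arc; and (iii)~it writes down the resulting $|T|+|Q|$ new arcs. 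Strong connectivity of the augmented graph is then verified using the DAG property that every non-special vertex lies on some source-to-sink path, combined with the fact that the new arcs tie all sources, sinks and isolated vertices into a single strongly connected component.

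The main obstacle lies in step (i): a naive construction based on a maximum set of vertex-disjoint source-to-sink paths can fall short, because that number may be strictly less than $\min\{|S|,|T|\}$ (consider, e.g., several sources funnelling through a small common set of intermediate vertices, forcing their paths to collide at a common sink). The Eswaran--Tarjan argument circumvents this by allowing the new arcs to form a more general (non-Hamiltonian) strongly connected auxiliary structure on $S\cup T\cup Q$, analysed via an augmenting-path argument on the bipartite reachability relation. Verifying that this auxiliary structure is indeed strongly connected is the technical heart of the argument; all the steps (bipartite reachability, augmenting, explicit arc assembly) run in polynomial time, which delivers the polynomial-time claim in the statement.
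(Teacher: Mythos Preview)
The paper does not give its own proof of this statement: Theorem~\ref{thm:Tarjan} is simply quoted from Eswaran and Tarjan~\cite{eswaran1976augmentation} and used as a black box throughout. So there is no ``paper's proof'' to compare against; your proposal is essentially an attempt to summarize the original Eswaran--Tarjan argument.

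Your necessity direction is complete and correct. Your sufficiency direction, however, is not a proof but an outline that ultimately defers the key step back to~\cite{eswaran1976augmentation}: you identify the obstacle (naive disjoint-path pairing can fail) and then say that the Eswaran--Tarjan construction ``circumvents this'' via an auxiliary structure whose strong connectivity ``is the technical heart of the argument'', without actually carrying out that verification. In effect you are citing the result in the same way the paper does, just with more commentary. If your goal was to match the paper's treatment, a one-line citation would suffice; if your goal was to give a self-contained proof, the sufficiency half still needs the explicit ordering/pairing lemma from~\cite{eswaran1976augmentation} (namely, that the sources and sinks can be ordered so that consecutive sink-to-source arcs, together with arcs threading in the isolated vertices, yield a single strongly connected component) and its verification.
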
 We say that a DAG is trivial if it contains one vertex.

Therefore, in the reminder of the paper  we will assume that $B<\max\{|S|, |T|\}+|Q|$, as otherwise MCI can be optimally solved in polynomial time. Sometimes, we will also assume that $\max\{|S|, |T|\} = |S|$, as, in those case, equivalent results can be obtained when $|T|\ge |S|$ by using the same arguments.



Next lemma allows us to focus on solutions that contain only edges connecting sink vertices to source vertices.
\begin{lemma}[\cite{coroAlgosensor2018}]
\label{lemma:leaf-root}
Let $N$ be a solution to the MCI problem, then there exists a solution $N'$ such that $|N|=|N'|$, $f(N)\leq f(N')$, and all edges in $N'$ connect sink vertices to source vertices.
\end{lemma}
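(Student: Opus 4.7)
My approach is an exchange argument: I would start from the given solution $N$ and iteratively reroute each edge so that its tail becomes a sink of $G$ and its head becomes a source of $G$, all the while ensuring that $f$ does not decrease. I would process the edges of $N$ one at a time, handling tails first and heads afterwards.

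Consider a single edge $e=(u,v)\in N$ with $u$ not a sink of $G$. Because $u$ has at least one outgoing $E$-edge and $G$ is a finite DAG, there is some vertex $u'$ reachable from $u$ along a $G$-path that has no outgoing $E$-edge; I would take such a $u'$ (a sink, treating isolated vertices as sinks if they arise in this construction). Let $N_0 = (N\setminus\{e\})\cup\{(u',v)\}$. The key sub-claim is $f(G(N_0))\geq f(G(N))$, which I would prove by showing that every ordered pair $(a,b)$ reachable in $G(N)$ remains reachable in $G(N_0)$: any witnessing walk either avoids $e$ (and so survives verbatim in $G(N_0)$) or traverses $e$, in which case I would substitute for the traversal the concatenation of the $G$-path $u\rightsquigarrow u'$ (whose edges, being in $E$, are still present in $G(N_0)$) followed by the new edge $(u',v)\in N_0$. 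A symmetric argument handles the head: if $v$ is not a source, I would pick a source $v'$ of $G$ that reaches $v$ in $G$ and swap $(u,v)$ for $(u,v')$, prepending the new edge with the $G$-path $v'\rightsquigarrow v$.

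Applying these two swaps exhaustively to every edge of $N$ produces the desired $N'$ in which every edge runs from a sink of $G$ to a source of $G$ and satisfies $f(G(N))\leq f(G(N'))$. The only remaining bookkeeping is the equality $|N'|=|N|$: a swap can shrink the edge set only when the new pair already lies in $E\cup N$, but a sink has no outgoing $E$-edge, so the replacement is never in $E$; the only failure mode is that the replacement duplicates an edge already in $N$, in which case I would pad $N'$ with an arbitrary fresh sink-to-source edge, which only further increases $f$ (or, if no such edge remains, the graph is already strongly connected and the lemma holds trivially). The main obstacle I anticipate is making the rerouting step fully rigorous, i.e.\ verifying that the concatenated walks really do witness reachability in $G(N_0)$; termination of the iterative swap process is immediate since each swap strictly advances an endpoint along a topological order of $G$.
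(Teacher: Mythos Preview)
The paper does not prove this lemma; it is imported from the cited reference and stated without proof. Your exchange argument---rerouting each added edge $(u,v)$ to $(u',v')$ where $u'$ is a sink of $G$ reachable from $u$ and $v'$ is a source of $G$ reaching $v$, and observing that any walk through the old edge can be detoured along the $G$-path $u\to u'$ (resp.\ $v'\to v$)---is exactly the standard proof of this fact and is correct.

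One definitional wrinkle worth noting: this paper explicitly excludes isolated vertices from both $S$ and $T$, so if $u$ (or $v$) is isolated your sentence ``because $u$ has at least one outgoing $E$-edge'' fails, and indeed the lemma as literally stated is false when $Q\neq\emptyset$ (take $G$ with two isolated vertices and $N$ a single edge between them). Your parenthetical about treating isolated vertices as sinks shows you sensed this, but the cleaner observation is that the paper only ever invokes the lemma on graphs with $Q=\emptyset$ (the reduction graph in Theorem~\ref{thm:exact3cover}, and the $Q=\emptyset$ subcases of the FPT proofs), handling isolated vertices separately via Proposition~\ref{prop:PathProp}. Under that hypothesis your argument goes through without the caveat.
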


The following proposition proved in~\cite{Sasak:2010} allows us to reduce one parameter to the another one when investigating the fixed-parameter tractability.

\begin{proposition}
[\cite{Sasak:2010}]
\label{prop:ReducingParameters}
Let $\Pi$ be an algorithmic problem and let $k_1$ and $k_2$ be two parameters. Assume that there is a (computable) function $h:\N\rightarrow \N$ such that for any instance $I$ of $\Pi$, we have that $k_1(I)\leq h(k_2(I))$. Then if $\Pi$ is FPT with respect to $k_1$, then it is FPT with respect to $k_2$. 
\end{proposition}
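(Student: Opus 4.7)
The plan is a direct simulation argument: use the FPT algorithm for the parameter $k_1$ as a black box and bound its running time in terms of $k_2$ via the inequality $k_1(I) \leq h(k_2(I))$. The only mild subtlety is that the function $f$ in the FPT running time is a priori arbitrary (in particular, not necessarily monotone), so a preliminary normalization step is needed.

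First I would fix an FPT algorithm $A$ for $\Pi$ with respect to $k_1$, whose running time on input $I$ is $f(k_1(I))\cdot |I|^{c}$ for some computable $f:\N\to\N$ and constant $c\geq 0$. Next I would replace $f$ by its \emph{monotone envelope} $\tilde f(k)=\max_{0\leq j\leq k} f(j)$; this is still computable because $f$ is computable, and it satisfies $\tilde f(k)\geq f(k)$ and $\tilde f(k)\leq \tilde f(k')$ whenever $k\leq k'$. In particular, $A$ still runs within time $\tilde f(k_1(I))\cdot |I|^{c}$.

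Then, given an instance $I$ of $\Pi$, I would simply invoke $A$ on $I$ and correctness follows for free. For the running time, apply the hypothesis $k_1(I)\leq h(k_2(I))$ together with monotonicity of $\tilde f$ to obtain
\[
\tilde f(k_1(I))\cdot |I|^{c} \;\leq\; \tilde f(h(k_2(I)))\cdot |I|^{c}.
\]
Defining the computable function $g:\N\to\N$ by $g(k)=\tilde f(h(k))$, this is exactly a bound of the form $g(k_2(I))\cdot |I|^{c}$, which by definition shows that $\Pi$ is FPT with respect to $k_2$.

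There is no real obstacle here, as the result is essentially a tautological consequence of the definition of FPT once the monotonicity issue is handled. The only thing worth stating carefully in the write-up is why $g=\tilde f\circ h$ remains computable, but this is immediate since both $\tilde f$ and $h$ are computable by assumption.
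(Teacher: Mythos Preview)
Your argument is correct and is the standard proof of this folklore fact; the monotone-envelope step is the right way to dispose of the only nontrivial issue. Note, however, that the paper does not supply its own proof of this proposition at all --- it merely states the result with a citation to~\cite{Sasak:2010} --- so there is nothing to compare against.
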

 

\section{Some hardness results}

Cor\`o et al. in~\cite{coroAlgosensor2018} proved that MCI is $\mathit{NP}$-complete and $\mathit{NP}$-hard to approximate within a factor greater than $1-\frac{1}{e}$ even in the case of graphs with a single sink vertex or a single source vertex. 
In this section we give an alternative reduction to prove that the problem is $\mathit{NP}$-complete. Our reduction is from \emph{Exact Cover by 3-Sets}. Later, it will help us to derive several properties of MCI.

\begin{theorem}\label{thm:exact3cover}
MCI is $\mathit{NP}$-complete.
\end{theorem}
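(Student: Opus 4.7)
The plan is to reduce from \textsc{Exact Cover by 3-Sets} (X3C), after noting that MCI is in $\mathit{NP}$: any candidate edge set $N$ with $|N| \le B$ can be verified in polynomial time by computing $f(G(N))$ via $|V|$ reachability searches on $G(N)$.

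Given an X3C instance $(X, \mathcal{C})$ with $|X|=3q$ and $\mathcal{C} = \{c_1,\ldots,c_m\}$, I construct a DAG $G$ as follows: a source vertex $u_j$ for each $c_j$, a sink vertex $y_i$ for each $x_i$, and edges $u_j \to y_i$ whenever $x_i \in c_j$; I then attach an ``amplifier'' gadget consisting of a fresh sink $T$ together with $L$ source vertices $r_1,\ldots,r_L$ and edges $r_\ell \to T$, where $L$ is a sufficiently large polynomial in the input size (say $L = \Theta(q m^2)$). Set $B = q$ and threshold $K = f(G) + 4q(L+1)$; the claim is that $(X,\mathcal{C})$ admits an exact cover iff $OPT(G,B) \ge K$.

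For the forward direction, given an exact cover $\{c_{j_1},\ldots,c_{j_q}\}$, taking $N = \{(T, u_{j_\ell}) : \ell=1,\ldots,q\}$ lets $T$ reach the $q$ vertices $u_{j_\ell}$ plus all $3q$ element sinks $y_i$, a gain of $4q$ reachable vertices; each of the $L$ amplifier sources $r_\ell$ inherits exactly the same gain through its edge to $T$, so the total increase in $f$ equals $(L+1)\cdot 4q$. For the reverse direction, by Lemma~\ref{lemma:leaf-root} we may assume every edge of an optimal $N$ is sink-to-source. Define $U_\infty \subseteq \{1,\ldots,m\}$ as the set of indices $j$ for which $u_j$ is reachable from $T$ in $G(N)$ (obtained by iteratively following $T$-edges and chaining through $y$-edges), and let $X_\infty = \bigcup_{j\in U_\infty} c_j$. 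Each edge of $N$ contributes at most one new index to $U_\infty$, so $|U_\infty|\le|N|\le q$, and $T$'s new reach equals $|U_\infty|+|X_\infty|$. Hence the contribution to the gain from $T$ and the $r_\ell$'s is $(L+1)(|U_\infty|+|X_\infty|) \le (L+1)\cdot 4q$, while the ``residual'' gain from the remaining vertices (the $y_i$'s that serve as tails of $y$-edges and their $u_j$-ancestors) is bounded by a polynomial $P(n)$ independent of $L$. Choosing $L$ so that $L+1 > P(n)$ ensures that $OPT \ge K$ forces $|X_\infty|=3q$; combined with $|U_\infty| \le q$ and $|c_j|=3$ for all $j$, this forces $|U_\infty|=q$ and the sets $\{c_j : j \in U_\infty\}$ to be pairwise disjoint, i.e., an exact cover.

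The main technical obstacle is to give a clean polynomial bound $P(n)$ on the residual gain, which requires careful accounting of how the reach of the element vertices $y_i$ and set vertices $u_j$ can grow in $G(N)$ through chains of new edges. Once this bound is established, setting $L$ slightly above $P(n)$ keeps the reduction polynomial-time, completing the proof of $\mathit{NP}$-completeness.
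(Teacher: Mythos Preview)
Your reduction is essentially correct and would go through once the bookkeeping is tightened, but it takes a different route from the paper. The paper also reduces from X3C, but instead of an amplifier star it funnels all element vertices into a single sink $v$ through a binary tree of $3q-2$ internal nodes; this yields a DAG with exactly one sink, in/out-degree bounded by a constant (using the restriction that each element occurs in at most three triples), and a threshold $M$ that is computed exactly without any asymptotic amplification. What the paper's construction buys is precisely this bounded-degree, bipartite, single-sink shape, which is then exploited in the subsequent corollaries (hardness for $\Delta$, for $|T|+|Q|$, for $\chi$, $\omega$, and $\tau-\nu$); your star gadget gives $T$ in-degree $L=\Theta(qm^2)$, so those corollaries would not follow from your reduction. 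What your approach buys is modularity: you never need to solve exactly for the threshold, and the amplification makes the reverse direction conceptually cleaner.

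One small gap to close in your reverse direction: you write $T$'s new reach as $|U_\infty|+|X_\infty|$, but edges of $N$ may also land on some $r_\ell$, so $T$ can additionally reach a set $R_\infty$ of amplifier sources. Since every source reachable from $T$ must be the head of some edge of $N$, one has $|U_\infty|+|R_\infty|\le |N|\le q$, and combined with $|X_\infty|\le 3|U_\infty|$ this still gives $|U_\infty|+|X_\infty|+|R_\infty|\le 4q$, with equality forcing $|U_\infty|=q$, $|R_\infty|=0$, $|X_\infty|=3q$. For your residual bound $P(n)$, note that any $u_j$ or $y_i$ can newly reach at most all $m$ set vertices, all $3q$ element vertices, $T$, and at most $q$ of the $r_\ell$'s (only those with an incoming $N$-edge), so $P(n)\le (m+3q)(m+4q+1)$, which is indeed independent of $L$.
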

\begin{proof}
Consider the decision version of MCI in which given a directed graph $G=(V, E)$ and two integers $M, B \in \N_{\geq 0}$, the goal is to find a set of additional edges $N \subseteq (V \times V) \setminus E$ such that $f(N) \geq M$ and $\card{N} = B$.
The problem is in $\mathit{NP}$ since it can be checked in polynomial time if a set of edges $N$ is such that  $f(N) \geq M$ and $\card{N} = B$.

We reduce from the \emph{Exact Cover by 3-Sets} (X3C) problem which is known to be $\mathit{NP}$-hard even if no element occurs in more than three subsets~\cite[Problem SP2, page 221]{GJ79}.

Consider an instance of the X3C problem $I_{X3C} = (X, Y, q)$ defined by a collections of $3q$ elements, $X = \{x_1, \ldots,  x_{3q}\}$, and $m$ subsets, $Y = \{y_1, \ldots,  y_m\}$, with $y_i \subseteq X$ and $|y_i| = 3$ for all $i$. 
The problem is to decide whether there exist $q$ subsets whose union is equal to $X$.

We define a corresponding instance $I_{MCI}=(G, M, B)$ of MCI as follows.
Create a graph $G=(V, E)$, where $V = \{v_{x_j}~|~x_j\in X\} \cup \{v_{y_i}~|~y_i\in Y\} \cup \{v\} \cup V'$ 
and $E = \{ (v_{y_i}, v_{x_j})~|~x_j \in y_i \} \cup E'$ and $V', E'$ form the following structure. We connect every pair of vertices $v_{x_j}, v_{x_{j+1}}$ to a new vertex $v_i$, i.e., we add the edges $(v_{x_j}, v_i)$ and $(v_{x_{j+1}}, v_i)$.
We, then, repeat this process with the new vertices $v_i$ just created in order to form a binary tree rooted in the vertex $v$ with all the edges toward $v$.
Note that with this process we create $3q-2$ vertices (excluding the root $v$).
It is easy to see that now we have bounded the in-degree and the out-degree of $G$, as problem X3C remains $\mathit{NP}$-hard if each element occurs in at most three subsets. See Fig.~\ref{fig:x3c-reduction} for an example. Then we set $B=q$ and $M = (B + 6q-1)^2 + (m-B)(B+6 q ) = (7q-1)^2 + 7q(m-q)$.

By Lemma~\ref{lemma:leaf-root}, we can assume that any solution $N$ of MCI contains only edges $(v, v_{y_i})$ for some $y_i \in Y$. 
In fact, $v$ is the only sink vertex and $v_{y_i}$ are the only source vertices.
Assume that there exists a set cover $Y'$, then we define a solution $N$ to the MCI instance as $N = \{ (v, v_{y_i})~|~y_i \in Y' \}$. 
It is easy to show that $f(N)\geq M$ and $|N|=q=B$. 
Indeed, all the vertices in $G$ can reach: vertex $v$, all the vertices $v_{x_j}$ (since $Y'$ is a set cover) and the vertices $v_i$ that form the tree that are $3q - 2$, and all the vertices $v_{y_i}$ such that $y_i\in Y'$. 
Moreover, each vertex $v_{y_i}$ such that $y_i\not\in Y'$ can reach itself. 
Therefore there are at least $B +6 q -1$ vertices that are able to reach $B + 6 q -1$ vertices and $m-B$ that reach themselves and $B+6 q -1$ other vertices. Hence, $f(N)\ge M$.
On the other hand, assume that there exists a solution for MCI then $N$ is in the form $\{ (v, v_{y_i})~|~y_i \in Y \}$ and we define a solution for the set cover as $F' = \{y_i~|~(v, v_{y_i}) \in N\}$.
We show that $F'$ is a set cover. 
By contradiction, if we assume that $F'$ is not a set cover and it cover only $3q-k$ elements of $X$ ($k\geq 1$), then $f(N) \leq (B+6q-1-k)^2 + (m-B+k)\cdot (B+6q-k)+(m-q)\cdot k$. Now it is matter of direct verification that this expression is less than $M$. This contradicts the choice of $N$ as a solution for MCI. We observe that in the reduction the graph is bipartite and that there is only one sink vertex $v$ and multiple sources. It is easy to see that the same result holds in graphs with only one source and multiple sinks, by using the transpose graph of $G$ and the same values of $B$ and $M$.
\end{proof}

\begin{figure}[t]
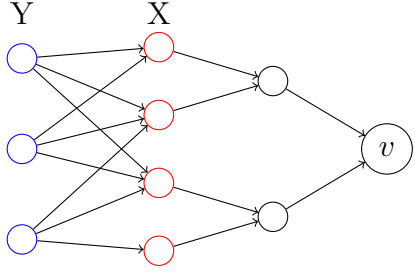

    \centering
    \includestandalone[mode=buildnew, scale=1]{img/tikz/x3c-to-mci}
    \caption{Example of reduction from X3C to MCI used in Theorem~\ref{thm:exact3cover}}
    \label{fig:x3c-reduction}
\end{figure}

\begin{corollary}
MCI is $\mathit{W}[2]$-hard with respect to the budget $B$.
\end{corollary}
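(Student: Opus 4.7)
The plan is to adapt the reduction from Theorem~\ref{thm:exact3cover} so that it starts from a genuinely $W[2]$-hard problem, namely Set Cover (given a universe $X$, a family $Y$ of subsets of $X$, and an integer $k$, decide if $X$ can be covered by at most $k$ sets from $Y$). Set Cover is $W[2]$-complete parameterized by $k$. Note that the original reduction in Theorem~\ref{thm:exact3cover} is from X3C with $B=q$, which by itself does not yield $W[2]$-hardness because X3C parameterized by $q$ is actually FPT (each element lies in at most three sets, so one can branch with a $3^q$-style algorithm).

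Given a Set Cover instance $(X, Y, k)$, I would build a DAG essentially identical to the one in Theorem~\ref{thm:exact3cover}: a source vertex $v_{y_i}$ for every $y_i\in Y$, a vertex $v_{x_j}$ for every $x_j\in X$, the edges $(v_{y_i},v_{x_j})$ whenever $x_j\in y_i$, and a binary "funnel" tree on the $v_{x_j}$'s oriented toward a unique sink $v$. Set $B=k$ and let the threshold $M$ be defined exactly as in Theorem~\ref{thm:exact3cover}, with $|X|$ in place of $3q$ and $k$ in place of $q$.

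By Lemma~\ref{lemma:leaf-root}, an optimal solution may be assumed to consist of $B$ edges of the form $(v,v_{y_i})$, which corresponds precisely to choosing $B$ sets from $Y$. These sets form a cover of $X$ if and only if every $v_{x_j}$ lies on a cycle through $v$ in the augmented graph, which is exactly the condition under which $f$ reaches the threshold $M$. The correctness argument in both directions is the same as in Theorem~\ref{thm:exact3cover}: one verifies that missing even a single element of $X$ strictly decreases $f$ below $M$. Since $B$ in the produced MCI instance equals the parameter $k$ of the Set Cover instance and the construction is polynomial in the input, this is an FPT reduction, so $W[2]$-hardness of MCI with respect to $B$ follows from $W[2]$-hardness of Set Cover.

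The only genuine technical step is the threshold calculation: $M$ must be tuned so that the "cover" case gives $f\ge M$ while any "non-cover" case, including the borderline situation where the chosen sets cover $|X|-1$ elements, gives $f<M$. This is a direct algebraic check, completely analogous to the one at the end of the proof of Theorem~\ref{thm:exact3cover}, and does not present real difficulty once the constants are written out with $|X|$ and $k$ in place of $3q$ and $q$.
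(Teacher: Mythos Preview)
Your proposal is correct and matches the paper's own argument: the paper simply points to the Set Cover reduction from~\cite{coroAlgosensor2018} (together with the $W[2]$-hardness of Set Cover parameterized by the solution size), which is exactly what you spell out by re-running the construction of Theorem~\ref{thm:exact3cover} with an arbitrary Set Cover instance and $B=k$. Your side remark is also well taken: the paper additionally asserts that the X3C reduction of Theorem~\ref{thm:exact3cover} already suffices, but since in X3C one has $q=|X|/3$ (and, in the bounded-occurrence variant used there, even a $3^{q}$ branching algorithm), X3C parameterized by $q$ is FPT and cannot by itself deliver $W[2]$-hardness---so routing through Set Cover, as you do, is the clean way to obtain the corollary.
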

This result directly follows either from the reduction in~\cite{coroAlgosensor2018} from Set Cover, that is known to be $W[2]$-hard~\cite{cesati2006compendium}, or from the reduction given in the proof of Theorem~\ref{thm:exact3cover}.

\begin{corollary}
MCI is not FPT with respect to $|S|+|Q|$ and $|T|+|Q|$, unless $\mathit{P=NP}$.
\end{corollary}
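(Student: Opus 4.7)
The plan is to harvest this corollary directly from the reduction used to prove Theorem~\ref{thm:exact3cover}. First I would carefully inventory the vertex classes in the DAG $G=(V,E)$ constructed there. Each $v_{y_i}$ has only outgoing edges (to the corresponding $v_{x_j}$'s), so $v_{y_i}\in S$; each $v_{x_j}$ has an incoming edge from at least one $v_{y_i}$ (since X3C can be assumed to have every element in some triple) and an outgoing edge into the binary tree $(V',E')$, so it is neither a source nor a sink nor isolated; each internal tree vertex has both incoming and outgoing edges; and the root $v$ has only incoming edges, hence $T=\{v\}$. Consequently $|T|=1$ and $Q=\emptyset$, so the entire family of instances produced by the reduction satisfies $|T|+|Q|=1$.

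Next I would invoke the transpose construction already noted at the end of the proof of Theorem~\ref{thm:exact3cover}: reversing all arcs of $G$ produces an equally hard instance in which $v$ becomes the unique source while the $v_{y_i}$'s become sinks, and still no vertex is isolated. Thus the same reduction, applied to the transpose, yields NP-hardness on instances with $|S|+|Q|=1$.

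Finally, I would apply the standard FPT slicing argument. If MCI were FPT with respect to $|T|+|Q|$, then there would exist a computable $f$ and constant $c$ such that MCI can be solved in time $f(|T|+|Q|)\cdot |V|^{c}$. Restricting to the NP-hard family above, where $|T|+|Q|=1$, this bound becomes $f(1)\cdot |V|^{c}$, i.e., polynomial in $|V|$, contradicting Theorem~\ref{thm:exact3cover} unless $\mathit{P=NP}$. The identical argument, using the transpose family, handles the parameter $|S|+|Q|$.

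There is essentially no obstacle: the whole content of the corollary is the observation that the hardness reduction of Theorem~\ref{thm:exact3cover} happens to keep $|T|+|Q|$ (resp.\ $|S|+|Q|$) equal to one. The only care required is verifying that no vertex of the constructed graph is isolated, which one checks by noting that every $v_{y_i}$ lies in at least one arc of the form $(v_{y_i},v_{x_j})$, every $v_{x_j}$ has both an incoming arc from some $v_{y_i}$ and an outgoing arc into $V'$, and every tree vertex (including the leaves, which are the $v_{x_j}$'s attached to the $v_i$'s) has at least one incident arc on each side.
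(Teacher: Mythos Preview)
Your proposal is correct and follows exactly the same approach as the paper: both observe that the reduction of Theorem~\ref{thm:exact3cover} produces instances with a single sink and no isolated vertices (and, via the transpose, a single source and no isolated vertices), so that the parameter in question is bounded by a constant on an $\mathit{NP}$-hard family. You simply spell out in more detail the verification that $Q=\emptyset$ and $|T|=1$, and make the ``FPT on a constant-parameter slice implies polynomial time'' argument explicit.
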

\begin{proof}
This just follows from the observation that the number of sources or sinks is one in the reduction, and there are no isolated vertices.
\end{proof}

\begin{corollary}
MCI is not FPT with respect to the chromatic number $\chi$, the size of the largest clique $\omega$ of $G$, and  the difference between the size of the minimum vertex cover $\tau$ and that of the largest independent set $\nu$, unless $\mathit{P=NP}$.
\end{corollary}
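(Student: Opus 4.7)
The plan is to invoke the very reduction used in the proof of Theorem~\ref{thm:exact3cover} and verify that the three parameters are bounded by an absolute constant on every instance it produces. At the end of that proof it is observed that the constructed graph is bipartite (and contains at least one edge); taking this as the key structural fact, the corollary will follow from three standard graph-theoretic observations: any bipartite graph with at least one edge satisfies $\chi = 2$ and $\omega = 2$ (no odd cycles, hence no triangles), and by König's theorem the matching number coincides with the minimum vertex cover, so $\tau - \nu = 0$.

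To make the bipartiteness explicit, I would give a concrete $2$-coloring of the underlying undirected graph. Color each vertex of the binary tree rooted at $v$ by the parity of its distance from $v$, and assign each source $v_{y_i}$ the color opposite to that of the leaves $v_{x_j}$. This is well-defined provided the tree is built so that every leaf $v_{x_j}$ lies at a common depth, which can always be arranged by padding the X3C instance with dummy elements (and a corresponding dummy triple covering each batch of them) until $3q$ is a power of two; the target values $B$ and $M$ are then shifted by known constants, and the correspondence between exact covers and high-value MCI solutions established in Theorem~\ref{thm:exact3cover} carries over verbatim. Since each $v_{y_i}$ is adjacent only to leaves, and the $v_{y_i}$'s form an independent set among themselves, the proposed coloring is proper.

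Combining these observations, the reduction produces MCI instances on which $\chi \leq 2$, $\omega \leq 2$, and $\tau - \nu = 0$ simultaneously. An FPT algorithm with respect to any of these parameters would run in time $f(c) \cdot |V|^{O(1)}$ for an absolute constant $c$ on every such instance, giving a polynomial-time algorithm for an $\mathit{NP}$-hard problem and forcing $\mathit{P=NP}$. The only nontrivial point in the argument is the padding step ensuring the binary tree can be balanced; everything else is a direct consequence of bipartiteness together with König's theorem, so I expect this bookkeeping to be the main, and rather modest, obstacle.
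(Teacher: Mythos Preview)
Your proposal is correct and follows essentially the same approach as the paper: both argue that the reduction graph from Theorem~\ref{thm:exact3cover} is bipartite, whence $\chi$ and $\omega$ are bounded and K\H{o}nig's theorem gives $\tau=\nu$. Your padding step to force a balanced binary tree (so that all leaves lie at a common depth and the $2$-coloring extends to the $v_{y_i}$'s) is a detail the paper glosses over, so if anything your argument is slightly more careful than the original.
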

\begin{proof}
This just follows by observing that the graphs obtained in the reduction from Theorem~\ref{thm:exact3cover} have bounded $\chi$ and $\omega$. Moreover these graphs are bipartite, hence $\tau(G)=\nu(G)$ and this implies that if $P\neq NP$, we have that the problem is not FPT with respect to $\tau(G)-\nu(G)$.
\end{proof}

\begin{corollary}
MCI is not FPT with respect to $\Delta =\max_{z\in V} (d^-(z)+d^+(z))$, where $d^-(z), d^+(z)$ are, respectively, the maximum in-degree and out-degree in the graph, unless $\mathit{P=NP}$.
\end{corollary}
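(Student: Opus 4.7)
The plan is to reuse the reduction from the proof of Theorem~\ref{thm:exact3cover}. My observation is that the graph constructed there already has bounded $d^-(z) + d^+(z)$ at every vertex $z$, so no new reduction is required; only a degree audit is needed, followed by the standard argument that NP-hardness at a constant value of a parameter rules out an FPT algorithm in that parameter unless $P = NP$.

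First, I would classify the vertices of the reduced graph into four types and bound the sum of in- and out-degrees for each. The source vertices $v_{y_i}$ satisfy $d^- = 0$ and $d^+ = 3$, since each $y_i$ has exactly three elements. The element vertices $v_{x_j}$ have $d^+ = 1$ (a single edge into the binary tree) and $d^- \leq 3$; here I would explicitly invoke the restriction, used already in Theorem~\ref{thm:exact3cover}, that X3C remains NP-hard when every element appears in at most three subsets. The internal tree vertices created by the pairwise merging process have $d^- = 2$ and $d^+ = 1$, and the root $v$ has $d^- = 2$ and $d^+ = 0$. Combining these cases gives $\Delta \leq 4$ uniformly on the image of the reduction.

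With this uniform bound in hand, the reduction itself witnesses that MCI is NP-hard on the class of instances with $\Delta$ at most a fixed constant. To convert this into the claimed non-FPT statement I would apply the standard argument: if MCI admitted an algorithm with running time $f(\Delta)\cdot |V|^{O(1)}$, then on this class $f(\Delta) = f(4)$ is itself a constant, so the algorithm would decide MCI in polynomial time, contradicting Theorem~\ref{thm:exact3cover} under the assumption $P \neq NP$.

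The only step requiring care, rather than a real obstacle, is verifying the degree bound at the internal vertices of the binary tree, since that part of the construction is described only briefly; but this is a routine check that the pairwise aggregation indeed yields in-degree two and out-degree one at every internal node and in-degree two at the root.
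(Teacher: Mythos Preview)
Your proposal is correct and follows exactly the same approach as the paper: you invoke the reduction of Theorem~\ref{thm:exact3cover}, observe that all vertex degrees there are bounded by an absolute constant (using the restricted version of X3C), and conclude that an FPT algorithm in $\Delta$ would yield a polynomial-time algorithm for X3C. The paper's own proof is essentially a one-line version of yours; your explicit degree audit by vertex type simply spells out what the paper leaves implicit.
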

\begin{proof}
It follows directly from Theorem~\ref{thm:exact3cover}, in fact it is easy to see that in the reduction we have bounded the in-degree and the out-degree of $G$. 
This implies that it cannot exists an FPT with parameter $\Delta$, otherwise, we would have a polynomial algorithm to solve X3C.
\end{proof}


\section{Some FPT results}

In this section, we present our first results on FPT of MCI. Throughout the paper we will use the following propositions.

\begin{proposition}\label{prop:PathProp} 
There is an optimal solution $N^*$,  such that
\begin{enumerate}[(a)]
    \item each isolated vertex of $G$ has in-degree and out-degree at most one in $G(N^*)$, 
    
    \item the isolated vertices of $G$ induce a directed path plus some isolated vertices in $G(N^*)$.
\end{enumerate}
\end{proposition}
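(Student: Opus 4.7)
The plan is to establish part~(a) by an edge-exchange argument on $N^*$ and then deduce part~(b) from~(a) by a rearrangement/budget argument. By Lemma~\ref{lemma:leaf-root} I may assume that every edge of $N^*$ goes from a vertex in $T\cup Q$ (a sink or an isolated vertex) to a vertex in $S\cup Q$ (a source or an isolated vertex).

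For part~(a), suppose some $v\in Q$ has $d^+_{G(N^*)}(v)\ge 2$, with outgoing edges $(v,u_1),(v,u_2)\in N^*$, and consider the exchange
\[
  N' \;=\; \bigl(N^*\setminus\{(v,u_2)\}\bigr)\cup\{(u_1,u_2)\}.
\]
The key point is that $f(G(N'))\ge f(G(N^*))$: any $a$-to-$b$ path in $G(N^*)$ that uses the deleted edge $(v,u_2)$ can be rerouted in $G(N')$ via $v\to u_1\to u_2$, while paths that avoid $(v,u_2)$ already live in $G(N')$. Hence $N'$ is also optimal. To drive this exchange to a fixed point, I would choose $N^*$ among optimal solutions so as to minimize a suitable lexicographic potential over $Q$-vertices (for instance, the sorted decreasing vector of out-degrees on $Q$, with in-degrees as a tie-breaker); selecting $u_1$ as an out-neighbour of $v$ of minimum current out-degree then strictly decreases the potential, contradicting minimality and forcing $d^+_{G(N^*)}(v)\le 1$. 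The degenerate case $(u_1,u_2)\in N^*$ is handled by simply deleting the redundant edge $(v,u_2)$ and refilling the budget with any missing edge, which cannot decrease $f$. The dual argument on the reverse graph yields the in-degree bound.

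For part~(b), part~(a) implies that the subgraph $H$ of $G(N^*)$ induced on $Q$ has maximum in-degree and out-degree at most one, so it is a disjoint union of simple directed paths and simple directed cycles. I would rule out cycles and multiple non-trivial components as follows. For a cycle $C\subseteq H$, deleting one edge turns $C$ into a path on the same vertex set while freeing one unit of budget; since the standing assumption $B<\max\{|S|,|T|\}+|Q|$ (Theorem~\ref{thm:Tarjan}) implies that the graph cannot be made strongly connected, there is always a non-reachable pair, so the freed edge can be used to strictly increase $f$, contradicting optimality. To merge two non-trivial paths in $H$, I would swap an edge so as to link the tail of one to the head of the other, verifying by a rerouting argument that no reachable pair is lost and that the merged path contributes strictly more internal reachability than the two originals. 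Iterating these exchanges reduces $H$ to a single directed path plus some isolated vertices, as claimed.

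The main obstacle will be the bookkeeping in part~(b): verifying that each rearrangement preserves \emph{all} reachable pairs in $G(N^*)$, not merely those inside $Q$, and that a beneficial target edge for the freed budget always exists. Both issues are addressed by the same path-rerouting technique used in part~(a), together with the hypothesis $B<\max\{|S|,|T|\}+|Q|$ that guarantees some unreached pair remains.
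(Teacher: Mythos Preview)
Your part~(a) is essentially the paper's argument: the same edge exchange $(v,u_2)\mapsto(u_1,u_2)$, with a more explicit termination device (the lexicographic potential) than the paper's informal ``repeat''. That part is fine.

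The genuine problem is your cycle argument in part~(b). You propose to delete one edge of a cycle $C\subseteq H$ and spend the freed unit of budget on some non-reachable pair. But deleting an edge from an isolated $k$-cycle drops its internal contribution from $k^2$ to $k(k{+}1)/2$, a loss of $\binom{k}{2}$ reachable pairs; a single new edge is not guaranteed to recover that. The phrase ``the freed edge can be used to strictly increase $f$'' only says you can improve \emph{after} the deletion, not that you beat the original $f(G(N^*))$, so no contradiction with optimality follows. Concretely, take $G$ to be five isolated vertices with $B=4$ (so $B<|Q|$): the $4$-cycle gives $f=17$, whereas any ``path plus one extra edge'' configuration gives at most $f=15$. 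Thus your exchange strictly loses, and in fact the proposition itself fails in the all-isolated case, so any correct argument must use that $S\cup T\neq\emptyset$ and route the cycle through a source/sink rather than just break it. (The paper, for what it is worth, simply asserts ``the isolated vertices induce vertex-disjoint directed paths'' after (a) and never addresses cycles at all; you are right to flag them, but your proposed fix does not work.)

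Your merging sketch is closer to the paper's approach, but ``swap an edge'' is not enough: removing, say, the incoming edge $(z_2,u_2)$ of the second path and adding $(v_1,u_2)$ gives $v_1$ out-degree~$2$, destroying~(a), and may also lose all reachability from $z_2$ into $P_2$. The paper resolves this by exchanging \emph{two} boundary edges simultaneously---replacing $(z,x),(y,w)$ by $(z,u),(v,w)$ with $P$ the longer path---together with a short case analysis when one of the target edges is already present. Your plan needs that two-edge swap and the accompanying case split to go through.
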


\begin{proof}
\begin{enumerate}[(a)]
\item Let $N$ be an optimal solution. Assume that a vertex $v\in Q$ is incident to two edges leaving it in $G(N)$.
Let the neighbors of $v$ be $u$ and $w$. 
Now, if we remove the edge $(v, w)$ and add $(u, w)$, we will still have that the vertex $w$ is reachable from $v$. 
Thus, we can join all these vertices in $|Q|$ with a path and have at least the same reachability as before. We can repeat these arguments in the case that a vertex in $Q$ has more that two outgoing edges in $G(N)$.
Similarly, one can show that the maximum in-degree is at most one. 

\item Thanks to (a), the isolated vertices induce vertex-disjoint directed paths. 
Now, we claim that at most one of these paths can contain at least one edge, others must be isolated vertices. 
On the opposite assumption, choose a directed path $P$ containing the maximum number of edges.
Let the source of $P$ be $u$ and the sink of $P$ be $v$. 
Let us consider any other path $P'$ and let $x$ and $y$ be the source and the sink of $P'$, respectively. Moreover, thanks to (a) there is at most one edge incoming $x$ and one edge outgoing $y$, let $(z,x)$ and $(y,w)$ be the (possible) edges incident to $x$ and $y$, respectively. If edges $(z,u)$ and $(v,w)$ do not belong to the current solution, we can replace edges $(z,x)$ and $(y,w)$ with $(z,u)$ and $(v,w)$ without decreasing the  value of the objective function. Otherwise, if $(z,u)$ is already in the current solution, we merge paths $P$ and $P'$ by using the same number of edges as before: we replace edge $(z,u)$ with $(y,u)$ and $(y,w)$ outgoing $y$ with $(v,w)$. Also in this case the value of the objective function is not decreased. If $(v,w)$ is already in the current solution, we can merge $P$ and $P'$ in a similar way. If both $(z,u)$ and $(v,w)$ are already in the current solution we obtain a solution that does not decreases the objective function and uses one unit of budget less. 
If $x$ has no incoming edge, we can remove $x$ from $P'$ and add it to $P$ without increasing the number of edges or decreasing the objective function. We can do a similar operation if $y$ has no outgoing edges.

By repeating these arguments we end with a single path of vertices in $Q$ in which the end-points can have multiple outgoing or incoming edges. We apply again the arguments in (a) to have at most one incoming and one outgoing edge.
\end{enumerate}
\end{proof}

Observe that the initial graph $G$ is a DAG. Thus, all of its strongly connected components contain just one vertex. We will call such strongly connected components trivial.
\begin{proposition}
\label{prop:OneNonTrivalSCC} There is an optimal way of adding $B$ new edges, such that the resulting graph contains at most one non-trivial strongly connected component.
\end{proposition}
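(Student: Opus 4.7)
The plan is to use an exchange argument: starting from any optimal solution $N$ whose augmented graph $G(N)$ has two or more non-trivial strongly connected components, I would construct another solution $N'$ with $|N'|\le|N|$, $f(G(N'))\ge f(G(N))$, and strictly fewer non-trivial SCCs. Iterating this reduction yields an optimal solution with at most one non-trivial SCC, which is what the proposition asserts.

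To carry out the exchange, I would fix two non-trivial SCCs $C_1,C_2$ of $G(N)$. Since $G$ itself is a DAG, every cycle of $G(N)$ must traverse at least one edge of $N$, so each $C_i$ contains at least one edge $e_i=(u_i,v_i)\in N$ with both endpoints in $C_i$; by Lemma~\ref{lemma:leaf-root} I may take $u_i$ to be a sink and $v_i$ a source of $G$. I would then swap endpoints by setting
\[
N'=(N\setminus\{e_1,e_2\})\cup\{(u_1,v_2),(u_2,v_1)\}.
\]
Since $u_1,u_2$ are sinks of $G$, the two new edges cannot already lie in $E$, so $N'$ is well-defined and $|N'|\le|N|$ (strict inequality only if a new edge was already present in $N$).

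Next I would verify two properties. First, that $C_1\cup C_2$ becomes a single SCC of $G(N')$: inside each $C_i$ there is a path from $v_i$ to $u_i$ in $G(N)$, and by cycle-shortening I may assume this path does not traverse $e_i$, so it uses only edges of $G$ and edges of $N\setminus\{e_1,e_2\}$, all of which survive in $G(N')$. Combined with the new edges, this yields a cycle $v_1\to\cdots\to u_1\to v_2\to\cdots\to u_2\to v_1$ in $G(N')$ which, together with the internal $C_i$-paths, absorbs all of $C_1\cup C_2$ into one SCC. Second, that the reachability relation of $G(N)$ is contained in that of $G(N')$: any path that used $e_1$ can be rerouted through $(u_1,v_2)$ and around the new SCC back to $v_1$, and symmetrically for $e_2$; paths using neither edge are preserved verbatim. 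Hence $f(G(N'))\ge f(G(N))$. The other non-trivial SCCs of $G(N)$ are untouched because their defining $N$-edges are retained and the two new edges lie inside $C_1\cup C_2$, so the total number of non-trivial SCCs strictly decreases.

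The main obstacle I expect is the cycle-shortening/rerouting step: I must justify that the $v_i\to u_i$ path chosen inside $C_i$ really can be taken to avoid $e_i$ even when $C_i$ contains several $N$-edges, and that the newly introduced edges together with these surviving paths genuinely restore every lost reachability. The first point follows from the observation that any minimal $v_i$-to-$u_i$ walk in $G(N)$ cannot traverse $e_i=(u_i,v_i)$, since doing so would let us take the subpath from the second occurrence of $v_i$ to the last occurrence of $u_i$; the second is a direct check using the cycle identified above. Once these are in place, the induction on the number of non-trivial SCCs closes the argument.
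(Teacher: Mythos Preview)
Your proof is correct and follows essentially the same edge-swap argument as the paper: pick one $N$-edge inside each of two non-trivial SCCs, cross their heads, and observe that reachability is preserved while the two components merge, so the number of non-trivial SCCs strictly decreases. You supply more detail than the paper (the invocation of Lemma~\ref{lemma:leaf-root}, the shortest-path justification that the $v_i\to u_i$ path avoids $e_i$, and the explicit rerouting), but the core idea is identical.
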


\begin{proof} Assume that after adding $B$ edges, the resulting graph contains two non-trivial strongly connected components $C_1$ and $C_2$. Since all components in the original graph were trivial, we have that there are edges $(v_1,u_1)$ in $C_1$ and $(v_2, u_2)$ in $C_2$ that were among these new added $B$ edges. Since $C_1$ and $C_2$ are strongly connected, there are paths $P_1$ in $C_1$ that connects $u_1$ to $v_1$ and $P_2$ in $C_2$ that connects $u_2$ to $v_2$. Consider the way of adding edges, which is obtained from the previous one by replacing $(v_1,u_1)$ and $(v_2, u_2)$ with $(v_1,u_2)$ and $(v_2, u_1)$. Observe that the number of reachable pairs has not decreased. Hence the resulting way of adding edges is optimal. However, this operation has decreased the number of non-trivial strongly connected components.
\end{proof}

\begin{theorem}\label{thm:fpt-s+t}
MCI is FPT with respect to $|S| + |T|$.
\end{theorem}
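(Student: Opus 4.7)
My plan is to combine Lemma \ref{lemma:leaf-root} and Proposition \ref{prop:PathProp} into a structural normal form for optimal solutions that is small enough to enumerate in $2^{O(|S|\cdot|T|)}\cdot|V|^{O(1)}$ time.

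First, I would argue that some optimal $N^*$ has the following shape. By Lemma \ref{lemma:leaf-root}, read with ``source'' and ``sink'' as ``vertex with no incoming edge in $G$'' and ``vertex with no outgoing edge in $G$'' (so that every $v\in Q$ qualifies as both), every edge of $N^*$ starts in $T\cup Q$ and ends in $S\cup Q$. By Proposition \ref{prop:PathProp}, the subgraph induced by $Q$ in $G(N^*)$ consists of a single directed path $P$ on some $k\in\{0,1,\dots,|Q|\}$ vertices together with $|Q|-k$ truly isolated vertices, and every vertex of $Q$ has in- and out-degree at most $1$ in $G(N^*)$. Combined, this forces the edges of $N^*$ to split into four disjoint parts: a subset $X\subseteq T\times S$; the $\max\{k-1,0\}$ internal edges of $P$; at most one edge of the form $(u,\text{first vertex of }P)$ with $u\in T$; and at most one edge of the form $(\text{last vertex of }P,w)$ with $w\in S$.

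Next, I would exploit the symmetry of $Q$: since all isolated vertices of $G$ are structurally identical, two candidate solutions that agree on the data $(X,k,u,w)$ and differ only in which $k$ vertices of $Q$ are used for $P$ yield the same value of $f$. Thus it suffices to enumerate the tuples $(X,k,u,w)$ with $X\subseteq T\times S$, $k\in\{0,1,\dots,|Q|\}$, $u\in T\cup\{\bot\}$, $w\in S\cup\{\bot\}$, subject to the budget constraint $|X|+\max\{k-1,0\}+\mathbf{1}[u\neq\bot]+\mathbf{1}[w\neq\bot]\leq B$, materialize each using any $k$ vertices from $Q$, evaluate $f(G(N))$ in polynomial time, and return the best. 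The number of tuples is at most $2^{|S|\cdot|T|}\cdot(|Q|+1)\cdot(|T|+1)\cdot(|S|+1)$, and $|S|\cdot|T|\leq (|S|+|T|)^2$, so the overall running time is $2^{O((|S|+|T|)^2)}\cdot|V|^{O(1)}$, which is FPT in $|S|+|T|$.

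The main obstacle is the structural step: one has to verify carefully that the two cited results really do force the normal form, in particular that beyond the edges listed above there are no additional edges of $N^*$ incident to interior vertices of $P$ or to the $|Q|-k$ remaining truly isolated vertices. This follows from the single-path conclusion of Proposition \ref{prop:PathProp}(b) together with the degree bound of Proposition \ref{prop:PathProp}(a), but it should be stated explicitly. Once the normal form is in place, the enumeration and the running-time bound are routine.
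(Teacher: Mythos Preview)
Your approach is correct and is essentially the paper's proof: guess the number $k$ of isolated vertices that form the path, treat that path as one extra endpoint pair, and brute-force over the remaining sink-to-source choices, which are bounded purely in terms of $|S|+|T|$. One caveat on the normal form: Proposition~\ref{prop:PathProp}(b) as stated only constrains the subgraph of $G(N^*)$ \emph{induced on $Q$}, so it does not by itself forbid an edge $(t,q)$ or $(q,s)$ with $t\in T$, $s\in S$ for a vertex $q\in Q$ that is isolated in that induced subgraph; you need one more exchange step (precisely the merging argument inside the proof of Proposition~\ref{prop:PathProp}(b)) to conclude such $q$ are isolated in $G(N^*)$ too. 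Also note that the paper obtains $(|S|+|T|)^{O(|S|+|T|)}$ rather than your $2^{O((|S|+|T|)^2)}$ by using $|X|\le B\le |S|+|T|$ and enumerating $\binom{|S|\,|T|}{B}$ subsets instead of all of $2^{T\times S}$.
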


\begin{proof}
We start with the case when $Q=\emptyset$. Given an instance of MCI with $G=(V,E)$ and budget $B$, we can always assume that $B \leq \max\{ |S|, |T|\}$ (Theorem~\ref{thm:Tarjan}).
Note that the overall number of possible solutions (as a set of edges) that we can have is $\binom{{|S| |T|}}{B}$ (see Lemma~\ref{lemma:leaf-root}).
%
Thus, for the running-time of the trivial algorithm that tries to all possible ways of joining sinks to sources using $B$ edges, we will have the following bound:
\[\mathcal{O}^*((|S| |T|)^{B}) \leq \mathcal{O}^*((|S|+ |T|)^{2 (|S|+ |T|)}),\]
as $B \leq \max\{ |S|, |T|\}\leq |S|+|T|$. 

Now, we complete the case $|Q|\neq 0$ by using Proposition~\ref{prop:PathProp} in the following way. Assume that in an optimal solution $OPT$, the isolated vertices induce a path of length $k$. 
We have $0\leq k\leq |Q|-1$. 
Let this path be $P$. 
Observe the rest of edges (we have $B-k$ of them) will either join a vertex of $T$ to that of $S$, or a vertex of $T$ to the beginning of $P$, or the end-point of $P$ to a vertex of $S$. 
Thus, we can view $P$ as one big vertex and remove the other isolated vertices. 

Just observe that each vertex of $T$ can join to $S$, to $P$ or the remaining $B-k$ isolated vertices.
Thus each vertex of $T$ has $|S|+B-k+1$ choices. 
Similarly, each vertex of $S$ has at most $|T|+B-k+1$ choices. 
Thus, the overall number of choices is bounded by $|T|\cdot (|S|+B-k+1)+|S|\cdot (|T|+B-k+1)$.
Hence we have at most $\binom{|T|\cdot (|S|+B-k+1)+|S|\cdot (|T|+B-k+1)}{B-k}$ possibilities of adding new edges. 
Clearly, we can assume that $B-k\leq |S|+|T|+1$, as otherwise, $G$ can be made to a strongly connected component in polynomial time (Theorem~\ref{thm:Tarjan}).
Thus
\begin{align*}
    &\binom{|T|\cdot (|S|+B-k+1)+|S|\cdot (|T|+B-k+1)}{B-k}
    \\
    &\quad= \mathcal{O}\left(\left[|T|\cdot (|S|+B-k+1)+|S|\cdot (|T|+B-k+1)\right]^{B-k}\right)
\end{align*} Since $B-k$ is bounded in terms of $|S|+|T|$, we have that this algorithm has running-time bounded in terms of $|S|+|T|$. Thus, if we knew the value of $k$ in the optimal solution, the running time of the algorithm will be a function of $|S|+|T|$ times some polynomial in the input size. 

To complete the proof, we do guessing of $k$, that is, we try all possibilities of the values of $k$, that is, $k=0,1,\ldots,|Q|-1$. 
Since $|Q|\leq |V|$, this increases the running-time of the algorithm by only of a polynomial factor.
\end{proof}

We observe that Theorem~\ref{thm:fpt-s+t} and Proposition~\ref{prop:ReducingParameters} imply that MCI is FPT also with respect to parameter $\max\{|S|,|T|\}$, as $|S|+|T| \leq 2 \max\{|S|,|T|\}$.

\begin{corollary}\label{cor:V2minusOPT} 
MCI is FPT with respect to $|V|^2-OPT$.
\end{corollary}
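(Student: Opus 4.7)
The plan is to dispatch on whether the budget $B$ already suffices to make $G$ strongly connected. If $B\ge\max\{|S|,|T|\}+|Q|$, Theorem~\ref{thm:Tarjan} produces in polynomial time an augmentation achieving $OPT=|V|^2$, so $|V|^2-OPT=0$ and the algorithm terminates immediately. In the remaining case $B<\max\{|S|,|T|\}+|Q|$, the same theorem asserts that the optimal augmented graph $G(N^*)$ is \emph{not} strongly connected, and this is the structural fact I would exploit.

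The crucial observation is that non-strong-connectivity of $G(N^*)$ forces $|V|$ itself to be bounded by $2(|V|^2-OPT)$. Write $k:=|V|^2-OPT$, and for $v\in V$ let $R^+(v)$ (resp.\ $R^-(v)$) denote the set of vertices reachable from $v$ (resp.\ reaching $v$) in $G(N^*)$. Double counting gives $\sum_{v}|R^+(v)|=\sum_{v}|R^-(v)|=OPT$, and hence
\[
\sum_{v\in V}\bigl(|V|-|R^+(v)|\bigr)\;=\;\sum_{v\in V}\bigl(|V|-|R^-(v)|\bigr)\;=\;k.
\]
If some vertex $v$ satisfied $R^+(v)=R^-(v)=V$, then for every pair $u,w\in V$ the concatenation of a $u$-to-$v$ path with a $v$-to-$w$ path would exhibit $u$ reaching $w$, making $G(N^*)$ strongly connected and contradicting the case assumption. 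Therefore every vertex contributes at least one to at least one of the two sums above; adding the sums gives $|V|\le 2k$.

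To finish, I combine this bound with Theorem~\ref{thm:fpt-s+t}: since in this case $|S|+|T|\le|V|\le 2k$, Proposition~\ref{prop:ReducingParameters} upgrades the FPT algorithm for parameter $|S|+|T|$ into one with respect to $|V|^2-OPT$. Alternatively, a brute-force enumeration of all edge subsets of size at most $B\le|V|\le 2k$ already runs in time $|V|^{O(B)}\le (2k)^{O(k)}$, which is FPT in $k$. The main obstacle is the reachability-based counting step that pins $|V|$ down to $2k$; once that inequality is in hand, the rest reduces to previously established material (Theorem~\ref{thm:Tarjan}, Theorem~\ref{thm:fpt-s+t}, Proposition~\ref{prop:ReducingParameters}).
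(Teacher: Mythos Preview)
Your argument is correct and lands on the same key inequality $|V|\le 2(|V|^2-OPT)$ that the paper uses, but you reach it by a genuinely different route. The paper argues from the budget deficit: assuming w.l.o.g.\ $|S|\ge|T|$ and $B<|S|+|Q|$, after adding $B$ edges at least $|S|+|Q|-B\ge 1$ vertices still have in-degree zero, and each such vertex accounts for $|V|-1$ missing ordered pairs; hence $(|V|-1)(|S|+|Q|-B)\le |V|^2-OPT$, and since the left side is at least $|V|-1\ge|V|/2$, one gets $|V|\le 2k$ and the brute-force $|V|^{O(|V|)}$ algorithm is FPT. Your double-counting of out- and in-reachabilities in $G(N^*)$ is more intrinsic: the inequality $|V(H)|\le 2\bigl(|V(H)|^2-f(H)\bigr)$ holds for \emph{any} non-strongly-connected digraph $H$, with no reference to the DAG structure of $G$, the sets $S,T,Q$, or the budget $B$. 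The paper's counting is a bit more explicit about \emph{which} pairs are missing, while yours is cleaner and transfers verbatim to settings where the source/sink bookkeeping is unavailable.
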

\begin{proof} 
We assume that $|S|=\max\{|S|, |T|\}$ since the following arguments can be used if $|T|\ge |S|$. Moreover, we assume that $B<\max\{|S|, |T|\}+|Q|=|S|+|Q|$, as otherwise we can make $G$ strongly connected in polynomial time. 
Now, if we add $B$ new edges, we will still have $|S|+|Q|-B$ vertices of $G$  with no incoming edges that remains with no incoming edges the addition of the new edges. Thus no other vertex can reach them. 
Hence
\[
    OPT\leq |V|^2-(|V|-1)\cdot (|S|+|Q|-B) \qquad\text{ or }\qquad (|V|-1)\cdot (|S|+|Q|-B) \leq |V|^2-OPT.
\]
Hence by Proposition~\ref{prop:ReducingParameters}, if we parameterize the problem with respect to $(|V|-1)\cdot (|S|+|Q|-B)$, we will have the result with respect to $|V|^2-OPT$. Consider the trivial brute-force algorithm running in time $\mathcal{O}^*(|V|^{2|V|})$. Observe that
\[
    |V|\leq 2\cdot (|V|-1)\leq  2\cdot (|V|-1)\cdot (|S|+|Q|-B).
\]
Thus, the trivial algorithm is an FPT algorithm with respect to $(|V|-1)\cdot (|S|+|Q|-B)$.
\end{proof}

\begin{corollary}\label{cor:OPT} 
MCI is FPT with respect to $OPT$.
\end{corollary}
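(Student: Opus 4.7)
The plan is to apply Proposition~\ref{prop:ReducingParameters} with a parameter that is already known to admit an FPT algorithm and that is upper-bounded by $OPT$. The key observation I would start from is that every vertex reaches itself by the trivial zero-length directed path, so $f(G(N))\geq |V|$ for every choice of $N$, and in particular $OPT\geq |V|$. Taking $h$ to be the identity function, this gives $|V|\leq h(OPT)$, which is exactly the hypothesis required by Proposition~\ref{prop:ReducingParameters}.

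Next, I would argue that MCI is trivially FPT with respect to $|V|$. Indeed, by Theorem~\ref{thm:Tarjan} we may assume $B<\max\{|S|,|T|\}+|Q|\leq |V|$, so the brute-force algorithm that enumerates all subsets of at most $B$ candidate new edges among the at most $|V|^2$ possible ones, and evaluates $f$ on each resulting graph (reachability and the outer sum being computable in polynomial time), runs in time $|V|^{O(|V|)}$. This is of the form $g(|V|)\cdot |V|^{O(1)}$, hence FPT when parameterized by $|V|$.

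Combining the two observations via Proposition~\ref{prop:ReducingParameters} yields an FPT algorithm with respect to $OPT$, which is the claim. An equivalent, slightly longer route that I could follow would be to chain $|S|+|T|\leq |V|\leq OPT$ and invoke Theorem~\ref{thm:fpt-s+t}; either way, the argument is a two-line reduction rather than a new algorithmic construction, so there is no real technical obstacle.
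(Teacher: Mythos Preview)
Your proof is correct. Both you and the paper start from the same key observation that $OPT\geq |V|$ (every vertex reaches itself) and then apply Proposition~\ref{prop:ReducingParameters}. The only difference is the intermediate parameter: the paper bounds $|V|^2-OPT\leq OPT^2-OPT$ and invokes Corollary~\ref{cor:V2minusOPT}, whereas you bound $|V|\leq OPT$ directly and invoke the brute-force algorithm that is FPT in $|V|$. Since the proof of Corollary~\ref{cor:V2minusOPT} itself works by showing that $|V|$ is bounded by the parameter and then running the same brute force, the two arguments unwind to the same computation; your route simply skips the detour through $|V|^2-OPT$.
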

\begin{proof} 
Observe that for any DAG $G$, we have $OPT\geq |V|$, since each vertex reaches itself. Thus,
\[
    |V|^2-OPT\leq OPT^2-OPT=h(OPT).
\]
Hence the result follows from Corollary~\ref{cor:V2minusOPT} and Proposition~\ref{prop:ReducingParameters}.
\end{proof}


\section{FPT with respect to \texorpdfstring{$|V|-B$}{|V| - B}}

In this section, we parameterize our problem with respect to $|V|-B$. We will assume that $\max\{|S|,|T|\} = |S|$, as the same arguments can be used when $|T|\geq |S|$. Our proof will rely on the following lemma.

\begin{lemma}\label{lem:BVminusS} 
Given a DAG $G$, assume $Q=\emptyset$, $|S|\geq |T|$, and let $B \geq |V\setminus S|$. 
Then we can find the optimal solution in polynomial time.
\end{lemma}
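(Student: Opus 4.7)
The plan is to match an upper bound on $OPT$ with an explicit polynomial-time construction. The upper bound, already implicit in the proof of Corollary~\ref{cor:V2minusOPT}, is $OPT \leq |V|^2 - \delta(|V|-1)$ where $\delta := |S|-B \geq 1$: any choice of $B$ new edges can provide an incoming edge to at most $B$ sources of $G$, so at least $\delta$ sources remain with no incoming edge in $G(N)$, and each of them is reached only by itself.

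To match this bound, I would construct a solution whose augmented graph has a single non-trivial strongly connected component $C$ of size $|V|-\delta$. First I would select a set $H\subseteq S$ of $B$ sources to ``keep'' (so that $S_0 := S\setminus H$ of size $\delta$ are the ``sacrificed'' sources), and then apply Theorem~\ref{thm:Tarjan} to $G[V\setminus S_0]$ to obtain $B$ new edges making $C:= V\setminus S_0$ strongly connected. In the resulting augmented graph each $s \in S_0$ reaches $\{s\}\cup C$ (since $s$, being a source of $G$, reaches some sink $t \in T \subseteq C$ via $G$ and then the entire $C$ via the new edges), while each $v \in C$ reaches precisely $C$, yielding $f = |C|^2 + \delta(|C|+1) = |V|^2 - \delta(|V|-1)$, matching the upper bound.

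The feasibility of applying Theorem~\ref{thm:Tarjan} to $G[V\setminus S_0]$ requires that $G[V\setminus S_0]$ has no isolated vertices and no non-source of $G$ becomes a source in $G[V\setminus S_0]$; both conditions translate to demanding that $H$ is a hitting set for the family $\mathcal{F} = \{P(v) : v \in V\setminus S,\; P(v)\subseteq S\}$, where $P(v)$ is the in-neighborhood of $v$ in $G$. The main obstacle is showing that such an $H$ exists: the crucial observation is that $|\mathcal{F}| \leq |V\setminus S| \leq B$ by the hypothesis of the lemma, so a greedy procedure yields a hitting set of size at most $|\mathcal{F}| \leq B$ (padded to exactly $B$ elements if necessary). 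Since moreover $T\subseteq V\setminus S$ gives $|T|\leq B$, we have $\max\{|H|,|T|\} = B$ and Theorem~\ref{thm:Tarjan} returns the required $B$ edges in polynomial time.
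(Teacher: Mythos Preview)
Your proof is correct and follows essentially the same strategy as the paper: both establish the upper bound $OPT \leq |V|^2 - (|S|-B)(|V|-1)$ and then match it by selecting $B$ sources $H\subseteq S$ so that $G[H\cup (V\setminus S)]$ can be made strongly connected with $B$ edges via Theorem~\ref{thm:Tarjan}. The paper phrases the choice of $H$ as picking a ``representative source'' (one that reaches $z$) for each $z\in V\setminus S$, while you phrase it as a hitting set for the in-neighbourhoods contained in $S$; these coincide, since whenever $P(z)\subseteq S$ any source reaching $z$ must in fact be an in-neighbour of $z$. Your formulation has the merit of making explicit why the induced subgraph has exactly $|H|=B$ sources, $|T|$ sinks, and no isolated vertices, which the paper asserts somewhat tersely when invoking Theorem~\ref{thm:Tarjan}.
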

\begin{proof} 
We can assume that $B \leq \max\{|S|, |T|\} + |Q|=|S|$, as otherwise the problem is polynomial time solvable (Theorem~\ref{thm:Tarjan}). Thus, $|V\setminus S|\leq |S|$.
As in Corollary~\ref{cor:V2minusOPT} we have the following upper bound on $OPT$:
\[
    OPT\leq |V|^2-(|S|-B)\cdot (|V|-1).
\]

Consider the following algorithm: Take a representative source from $S$ for each vertex of $V\setminus S$. A source $s\in S$ is representative for a vertex $z$, if in $G$ there is a directed path that connects $s$ to $z$. Put these $|V\setminus S|$ sources into a subset of $S$ of cardinality $|B|$. With $B$ edges we can create a strongly connected component with this subset and $V\setminus S$ (Theorem~\ref{thm:Tarjan}). Thus, we have $(B+|V\setminus S|)^2$ pairs in the strongly connected component. Moreover, the remaining $|S|-B$ sources can reach the vertices of the strongly connected component and themselves. Hence we have 
\begin{align*}
    ALG &\geq (B+|V\setminus S|)^2+(|S|-B)\cdot (1+B+|V\setminus S|)
    \\
    &=|S|-B+(B+|V|-|S|)|V|=|V|^2-(|V|-1)(|S|-B).
\end{align*}
Taking into account the upper bound for $OPT$, we have that this algorithm find the optimum. The proof is complete. 
\end{proof}

\begin{theorem}\label{thm:|V|minusBFPT} 
MCI is FPT with respect to $|V|-B$.
\end{theorem}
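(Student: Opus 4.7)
My plan is a case analysis driven by the parameter $k:=|V|-B$. Without loss of generality assume $|S|\geq|T|$, and by Theorem~\ref{thm:Tarjan} I can further assume $B<|S|+|Q|$, which immediately yields $|T|+|I|<k$, where $I$ denotes the set of internal vertices of $G$ (those with both an incoming and an outgoing edge).

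If $|S|<k$, then $|S|+|T|<2k$, and Theorem~\ref{thm:fpt-s+t} together with Proposition~\ref{prop:ReducingParameters} immediately gives FPT in $k$; the running time $(|S|+|T|)^{O(|S|+|T|)}|V|^{O(1)}$ translates into the promised $2^{O(k\log k)}|V|^{O(1)}$ bound.

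If instead $|S|\geq k$ (equivalently $B\geq|V\setminus S|$), Lemma~\ref{lem:BVminusS} already settles the subcase $Q=\emptyset$ in polynomial time. For $Q\neq\emptyset$ I invoke Proposition~\ref{prop:PathProp}: there is an optimum whose isolated vertices form a single directed path on some $r$ of them (with the remaining $|Q|-r$ still isolated), so I enumerate $r\in\{0,1,\ldots,|Q|\}$ in polynomial time. For each guess $r$ I pin down an arbitrary path on $r$ isolated vertices (consuming $(r-1)_+$ edges, since all isolated vertices are interchangeable), discard the remaining $|Q|-r$ isolated vertices from the analysis as they contribute $|Q|-r$ to $f$ in a trivial, additive way, and try to invoke Lemma~\ref{lem:BVminusS} on the residual DAG $G_r$ with budget $B-(r-1)_+$. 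Returning the maximum over $r$ closes Case 2.

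The delicate step is verifying that Lemma~\ref{lem:BVminusS}'s precondition $B_r\geq|V_r\setminus S_r|$ holds throughout the enumeration. A short computation reduces this inequality to $|S|+|Q|\geq k+2(r-1)_+$, which follows automatically from $|S|\geq k$ for $r\leq 2$ but can fail for $r\geq 3$. The main obstacle of the proof is to close this remaining regime, and I expect to do so by a direct SCC construction in the spirit of Lemma~\ref{lem:BVminusS}'s proof: after baking the path into $G$, I select a subset $S_{in}\subseteq S$ for which Theorem~\ref{thm:Tarjan} makes $T\cup I\cup Q_P\cup S_{in}$ strongly connected within the remaining budget, thereby attaining the upper bound $|V|^2-(|V|-1)(|S|+|Q|-B)$ used in Corollary~\ref{cor:V2minusOPT} and certifying optimality for that $r$.
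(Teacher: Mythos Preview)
Your overall strategy is sound and close to the paper's: both distinguish a ``small'' regime handled by an FPT subroutine and a ``large-budget'' regime handled by an explicit optimal construction. Your split on $|S|<k$ versus $|S|\geq k$ (invoking Theorem~\ref{thm:fpt-s+t} in the first branch) is a clean variant of the paper's split on $|V|\leq 2k$ versus $|V|>2k$ (which uses brute force in the first branch); both are correct, and yours reuses an existing FPT result rather than raw enumeration.

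The genuine gap is in your last paragraph. You claim that for each guess $r\geq 3$ the SCC construction on $T\cup I\cup Q_P\cup S_{in}$ attains the \emph{global} upper bound $|V|^2-(|V|-1)(|S|+|Q|-B)$ and ``certifies optimality for that $r$''. This is false whenever $r<|Q|$: the $|Q|-r$ discarded isolated vertices contribute only $1$ each to $f$ and are unreachable from the SCC, so the value falls strictly short of that bound. The fix is simple and in fact removes the need for the $r$-enumeration in Case~2 altogether: put \emph{all} of $Q$ into the component, i.e.\ take $C=T\cup I\cup Q\cup S_{in}$ with $|S_{in}|=B-|Q|$ chosen to contain a representative source for every vertex of $I\cup T$. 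This is feasible because in Case~2 you have $B\geq |V\setminus S|=|I|+|T|+|Q|$, hence $B-|Q|\geq |I|+|T|$. In the subgraph induced by $C$ one gets $S_C=S_{in}$, $T_C=T$, $Q_C=Q$, so Theorem~\ref{thm:Tarjan} needs exactly $\max\{B-|Q|,|T|\}+|Q|=B$ edges; the $|S|+|Q|-B$ leftover sources each reach $C$ through their out-edge into $I\cup T$, and the resulting value matches the upper bound from Corollary~\ref{cor:V2minusOPT}. Thus Case~2 is polynomial outright.

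For comparison, the paper handles $Q\neq\emptyset$ differently: it guesses the path length, deletes all of $Q$, reduces the budget accordingly, and re-runs the $Q=\emptyset$ analysis on $G'=G\setminus Q$ (where it may land in either the brute-force or the Lemma~\ref{lem:BVminusS} sub-case, using $|V'|-B'\leq |V|-B$). Your route, once corrected as above, is arguably tighter: it amounts to extending Lemma~\ref{lem:BVminusS} directly to graphs with isolated vertices, showing the whole large-budget regime is polynomial without any guessing.
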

\begin{proof} First assume that $Q=\emptyset$, that is, there are no isolated vertices in $G$. Note that if $|V|-B\geq \frac{|V|}{2}$, then $|V|$ is bounded in terms of $|V|-B$, hence the trivial algorithm will be an FPT algorithm in terms of $|V|-B$. 
On the other hand, let us assume that $|V|-B\leq \frac{|V|}{2}$. 
Then $B\geq \frac{|V|}{2}$. 
Since $B \leq \max\{|S|, |T|\} + |Q|=|S|$, then, 
\[
    \frac{|V|}{2} \leq B \leq |S| \qquad\text{ or }\qquad |V\setminus S|=|V|-|S|\leq \frac{|V|}{2} \leq B.
\]
Thus, $B\geq |V\setminus S|$. 
Hence we can solve this case in polynomial time thanks to Lemma~\ref{lem:BVminusS}.

Now, assume that $Q\neq \emptyset$. By Propositions~\ref{prop:PathProp} and~\ref{prop:OneNonTrivalSCC}, we have that the isolated vertices induce a path of length $k$ ($0\leq k\leq |Q|-1$). Moreover, the end-vertices of this path are joined to at most one vertex outside it. Consider the graph $G'=G\setminus Q$, and let $B'=B-(k+1)$. Observe that the added edges must form an optimal solution for this new instance. This allows us, to finish the proof by guessing $k$, that is, we try all possible values of $k=0,1,\ldots,|Q|-1$. For each fixed $k$, we observe that $G'$ contains no isolated vertex, hence we can solve this instance with the approach outlined above. By taking the one with maximum optimal number of pairs we will get an optimal solution for $G$. Observe that since there are $|Q|$ possible values of $k$, and for each $k$, $|V'|-B'\leq |V|-B$, we will have that the running-time of the algorithm will be FPT in terms of $|V|-B$.
\end{proof}


\section{FPT with respect to the matching number}

In this section, we parameterize the problem with respect to the size of the largest matching. Recall that a matching in a DAG is a subset of edges, such that no two edges of the subset share a vertex. Let $\nu(G)$ be the matching number of $G$, that is, the size of the largest matching in $G$. Our proof will require the following lemma.

\begin{lemma}\label{lem:IsomorphismClasses} 
Let $H=(X, Y, E)$ be a bipartite graph, such that for each $x_1, x_2\in X$, we have that $N(x_1)\neq N(x_2)$, where $N(x)$ denotes the set of neighbors of the vertex $x$. 
Then:
\[
    |X|\leq \nu(H)+2^{\nu(H)}.
\]
\end{lemma}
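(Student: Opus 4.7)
The plan is to use König's theorem (which applies because $H$ is bipartite) to convert the bound on the matching number into a bound on a vertex cover, and then exploit the distinct-neighborhoods hypothesis to bound the remaining part of $X$ via a counting argument on subsets.

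More precisely, let $\nu=\nu(H)$ and apply König's theorem to obtain a vertex cover $C$ of $H$ with $|C|=\nu$. Split $C$ as $C=C_X\cup C_Y$ with $C_X\subseteq X$ and $C_Y\subseteq Y$, so that $|C_X|+|C_Y|=\nu$. The key observation is that every vertex $x\in X\setminus C_X$ satisfies $N(x)\subseteq C_Y$: indeed, any edge incident to $x$ must be covered by $C$, and since $x\notin C_X$ the other endpoint must lie in $C_Y$.

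Now invoke the hypothesis that the map $x\mapsto N(x)$ is injective on $X$. In particular it is injective on $X\setminus C_X$, and its image is contained in the power set of $C_Y$. Hence
\[
|X\setminus C_X|\le 2^{|C_Y|}\le 2^{\nu}.
\]
Combining with $|C_X|\le \nu$ gives $|X|\le \nu+2^{\nu}$, as desired.

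The argument is short and essentially self-contained; no step looks like a real obstacle. The only subtlety is remembering that we need $H$ to be bipartite to apply König's theorem, which is part of the hypothesis. I would not try to improve the bound (e.g.\ replace $\nu+2^{\nu}$ by $2^{\nu}$), since the additive $\nu$ comes naturally from vertices of $X$ inside the vertex cover, whose neighborhoods are unconstrained.
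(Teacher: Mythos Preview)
Your proof is correct. The paper's own argument is slightly different and a touch more elementary: instead of invoking K\"onig's theorem, it fixes a maximum matching $M$, lets $X_M\subseteq X$ and $Y_M\subseteq Y$ be the vertices saturated by $M$ (so $|X_M|=|Y_M|=\nu$), and observes that by maximality of $M$ no vertex of $X\setminus X_M$ can have a neighbor in $Y\setminus Y_M$ (such an edge would be augmenting). Hence every $x\in X\setminus X_M$ has $N(x)\subseteq Y_M$, and the distinct-neighborhoods hypothesis gives $|X\setminus X_M|\le 2^{|Y_M|}=2^{\nu}$, so $|X|\le |X_M|+2^{\nu}=\nu+2^{\nu}$.

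Your route via K\"onig's theorem is equally valid and just as short; the only trade-off is that the paper's argument avoids the (admittedly standard) appeal to K\"onig, using nothing beyond the definition of a maximum matching. As a small bonus you did not exploit: since $|C_X|+|C_Y|=\nu$, your bound $|X|\le |C_X|+2^{|C_Y|}$ actually yields $|X|\le 2^{\nu}$ after a one-line optimization, slightly sharper than the stated $\nu+2^{\nu}$; the paper's decomposition, with $|X_M|=\nu$ exactly, does not directly give this refinement.
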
 
\begin{proof} 
Let $M$ be a maximum matching. Let $X_M$ and $Y_M$ be the set of vertices in $X$ and $Y$, respectively, that are covered by $M$. 
Clearly, $|X_M|=|Y_M|=\nu(H)$. Since $M$ is maximum, no vertex in $X\setminus X_M$ is adjacent to that of in $Y\setminus Y_M$. Thus, the neighbors of $X\setminus X_M$ are in $Y_M$. Since different vertices have different neighbors, we have that 
\[
    |X\setminus X_M|\leq 2^{|Y_M|}.
\]
Hence, 
\[
    |X|\leq |X_M|+2^{|Y_M|}=\nu(H)+2^{\nu(H)}.
\]
The proof is complete.
\end{proof}

\begin{theorem}\label{thm:FPTnu}
MCI is FPT with respect to $\nu(G)$.
\end{theorem}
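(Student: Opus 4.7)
The plan is to construct an FPT algorithm with running time $2^{2^{O(\nu(G))}} |V|^{O(1)}$, by first reducing the instance to one with few sources and sinks and then invoking Theorem~\ref{thm:fpt-s+t}. Declare two sources \emph{equivalent} iff they have the same set of out-neighbors in $G$, and two sinks equivalent iff they have the same set of in-neighbors. Let $p$ and $q$ denote the resulting numbers of source and sink equivalence classes.

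I bound $p$ via Lemma~\ref{lem:IsomorphismClasses}. Pick one representative per source class and form the bipartite graph $H_S$ whose left side is the set of representatives, whose right side is the union of their direct out-neighbors, and whose edges are the corresponding edges of $G$. By construction, the representatives have pairwise distinct neighborhoods in $H_S$. Since $H_S$ is a subgraph of the underlying undirected graph of $G$, we have $\nu(H_S) \leq \nu(G)$, and Lemma~\ref{lem:IsomorphismClasses} then gives $p \leq \nu(G) + 2^{\nu(G)} = O(2^{\nu(G)})$; a symmetric argument yields the same bound for $q$.

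The key consequence is that two equivalent sources share the same set of non-self descendants, since the descendants of a source are exactly those vertices reachable from its direct out-neighbors. An analogous statement holds for equivalent sinks and their non-self ancestors. Hence a transposition of two equivalent sources is an automorphism of $G$ that preserves $f$ on every augmentation, so sources (sinks) within the same class are fully interchangeable in any solution of MCI. Exploiting this symmetry, the algorithm reduces $(G,B)$ to an equivalent instance $(G', B)$ with $|S'|+|T'| = O(2^{\nu(G)})$ by keeping only a bounded number of representative sources and sinks per class; the fixed contribution of the removed vertices to $f$ is absorbed into an additive constant. Invoking Theorem~\ref{thm:fpt-s+t} on $(G', B)$ then yields the stated running time $(|S'|+|T'|)^{O(|S'|+|T'|)} |V|^{O(1)} = 2^{2^{O(\nu(G))}} |V|^{O(1)}$; isolated vertices are handled as in Theorem~\ref{thm:fpt-s+t}, by guessing the length $k$ of the directed path they induce in an optimal solution (Proposition~\ref{prop:PathProp}).

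The main obstacle is making the reduction rigorous, because surplus sources cannot be naively deleted: each still contributes to its own term in $f$, and deleting a vertex changes both $|V|$ and the value of $f(G)$. The resolution will be a lemma showing that within each class it suffices to retain a number of representatives bounded by a function of $\nu(G)$, with the contribution of the deleted vertices encoded as a precomputed additive correction to the target value. This combines the interchangeability above with Proposition~\ref{prop:OneNonTrivalSCC}, which constrains an optimal solution to produce at most one non-trivial strongly connected component and thereby limits how many representatives per class can usefully participate in an optimal augmentation.
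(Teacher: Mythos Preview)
Your bound on the number of equivalence classes via Lemma~\ref{lem:IsomorphismClasses} is correct and matches the paper. The gap is the reduction step: the claim that one can retain only $g(\nu)$ representatives per class and absorb the rest into a \emph{precomputed} additive correction is false as stated, and Proposition~\ref{prop:OneNonTrivalSCC} does not rescue it.

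Take $G$ with one sink $t$ and $n$ sources $s_1,\ldots,s_n$, each with the single edge $(s_i,t)$. Here $\nu(G)=1$ and all sources lie in one class. For any $B\le n-1$ the optimum adds $(t,s_1),\ldots,(t,s_B)$, producing a single non-trivial SCC of size $B+1$ and value $f=(B+1)^2+(n-B)(B+2)=2n+nB+1$. Thus the number of sources that ``usefully participate'' is exactly $B$, not bounded by any function of $\nu(G)=1$. If you keep only $K=g(1)$ sources in $G'$ and run the same budget $B>K$, the best you can do in $G'$ is an SCC of size $K+1$; each deleted source then contributes at most $K+2$, so the value you recover is $2n+nK+1<2n+nB+1$. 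The ``correction'' for a deleted source equals $1+|\text{desc}_{G(N)}(t)|$, which depends on how many class members receive new edges in $N$ and hence on $B$; it cannot be precomputed, and shrinking $|S'|+|T'|$ to $O(2^{\nu})$ loses exactly this information when $B$ is large.

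The paper closes this gap by a case split you are missing. It first extracts minimal representative sets $S_0,T_0$ with $|S_0|,|T_0|\le\nu(G)$. If $B\ge\max\{|S_0|,|T_0|\}$, a direct polynomial-time construction (enlarge $S_0$, $T_0$ to size $B$ and make that part strongly connected) is shown to meet the upper bound on $OPT$, so this regime is polynomial. Only when $B\le\max\{|S_0|,|T_0|\}\le\nu(G)$ does the equivalence-class argument kick in, and then $B$ itself is bounded by $\nu(G)$, so enumerating the $(B+1)^{l\cdot m}$ budget splits among class pairs and the $2^{B^2}$ configurations within each pair is FPT. Your kernelization-to-Theorem~\ref{thm:fpt-s+t} route could be made to work in this second regime (keeping $\min(|C|,B)$ representatives per class $C$ gives $|S'|+|T'|\le B\cdot O(2^{\nu})$, bounded since $B\le\nu$), but the large-$B$ case must be handled separately.
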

\begin{proof} Again, we start with the case $Q=\emptyset$. Let $S_0$ and $T_0$ be a minimal subset of sources and sinks, respectively, such that they are representatives of the remaining graph, that is, for each vertex $v\in V$, there is a vertex $s_v\in S_0$ and a vertex $t_v\in T_0$ such that there is a path from $s_v$ to $v$ and a path from $v$ to $t_v$ inn $G$, and $S_0$ and $T_0$ are minimal inclusion-wise.

We claim that $|S_0|, |T_0|\leq \nu(G)$. Let us prove for $S_0$. Since it is minimal, we have that for each source $s\in S_0$, there is a vertex $z_s\in V\setminus S$ such that $s$ is the only vertex of $S_0$ that is connected to $z_s$ with a path. Now, if we consider these paths for all vertices of $S_0$, then clearly they are vertex disjoint. 
Hence, the first edges of these paths form a matching of cardinality $|S_0|$. 
Thus, $|S_0|\leq \nu(G)$. 

We distinguish two cases. First, assume that $B\geq \max\{|S_0|, |T_0|\}$. As in the previous section, we can assume that $B \leq \max\{|S|, |T|\} + |Q|=|S|$. Now, let us assume that $B\leq |T|$. 
In this case, any solution, will leave $|S|-B$ sources of $G$ still sources, and similarly, $|T|-B$ sinks of $G$ will be still sinks after adding the $B$ edges. 
Hence, we have the following upper bound for $OPT$:
\[
    OPT\leq |V|^2-(|S|-B)\cdot (|V|-1)-(|T|-B)\cdot (|V|-1)=|V|^2-(|V|-1)\cdot (|S|+|T|-2B).
\]
Since $B\geq \max\{|S_0|, |T_0|\}$, and $B\leq |T|\leq |S|$, we can add new sources to $S_0$ and sinks to $T_0$, such that we get sets of cardinality $B$. 
Now, with $B$ edges we can make this part a strongly connected component together with all internal vertices (Theorem~\ref{thm:Tarjan}). Observe that outside the strongly connected component, there will be $|S|-B$ sources and $|T|-B$ sinks. Thus, the strongly connected  component will have $|V|-(|S|-B)+(|T|-B)=|V|-|S|-|T|+2B$ vertices. 
Hence we will have $(|V|-|S|-|T|+2B)^2$ pairs in the solution. Moreover, the remaining sources can reach the strongly connected component, and the vertices of the strongly connected component can reach the remaining sinks. 
Thus, in total we will have at least
\begin{align*}
    ALG & \geq (|V|-|S|-|T|+2B)^2+(|S|-B)\cdot (1+|V|-|S|-|T|+2B)
    \\
    &\qquad+(|T|-B) \cdot (1+|V|-|S|-|T|+2B)
    \\
    &= (|V|-|S|-|T|+2B)^2+(1+|V|-|S|-|T|+2B)\cdot(|S|+|T|-2B)
    \\
    &= (|S|+|T|-2B)+|V|\cdot (|V|-|S|-|T|+2B)
    \\
    &=|V|^2-(|V|-1)\cdot (|S|+|T|-2B) = OPT.
\end{align*} 
Thus, we have an optimal number of pairs. 
On the other hand, if $|T|\leq B\leq |S|$, then we can overcome this case similarly. Observe that any solution will leave $|S|-B$ sources of $G$ still sources. 
Hence, we will have the following upper bound for $OPT$:
\[
    OPT\leq |V|^2-(|V|-1)\cdot (|S|-B).
\]
Let us add new sources from $S$ to $S_0$, such that we have exactly $B$ sources in it. 
Clearly, we can create a strongly connected component with these $B$ sources and the remaining part of the graph (except $|S|-|S_0|$ sources) (Theorem~\ref{thm:Tarjan}). 
Moreover, the remaining sources will reach the strongly connected component. 
Hence a polynomial time algorithm finds the optimal solution, in fact
\begin{align*}
    ALG &\geq (|V|-|S|+B)^2+(|S|-B)\cdot (1+|V|-|S|+B)
    \\
    &=|S|-B+(|V|-|S|+B)\cdot |V|=|V|^2-(|V|-1)\cdot (|S|-B) = OPT.
\end{align*}

It remains to consider the case $B\leq \max\{|S_0|, |T_0|\}$. 
Hence $B\leq \nu(G)$. Let us say that two sources are equivalent, if they are adjacent to the same set of vertices.
Observe that this relation is an equivalence relation defined on the set of sources. 
Thus, it partitions $S$ into equivalence classes. We claim that the number of equivalence classes is bounded by some function of $\nu(G)$. Assume that $S_1$,\ldots, $S_l$ are the equivalence classes. Consider a bipartite graph $H=(X, Y, E)$, where $X=\{S_1,\ldots, S_l\}$, $Y=N_G(S)$ and $S_i$ is joined to $y\in Y$, if and only if in $G$ there was a vertex in $S_i$ that was adjacent to $y$. 
Since $S_1,\ldots, S_l$ are pairwise different equivalence classes, we have that $H$ satisfies the conditions of Lemma~\ref{lem:IsomorphismClasses}. 
Thus,
\[
    l=|X|\leq \nu(H)+2^{\nu(H)}\leq \nu(G)+2^{\nu(G)}.
\]
Similarly, one can show that the number of equivalence classes of sinks is bounded in terms of $\nu(G)$. 
Now, let $S_1,\ldots, S_l$ and $T_1,\ldots, T_m$ be the equivalence classes of sources and sinks. Consider all possible partitions of $B$ into the sum of $B_{ij}$s, where $i=1,\ldots, l$ and $j=1,\ldots, m$. 
Intuitively, $B_{ij}$ shows how much budget we spend in connecting sinks in $T_j$ to sources in $S_i$. Since each of $B_{ij}$s can be at most $B$, we have that the total number of partitions is at most $(B+1)^{l\cdot m}$. Since $B$ is bounded by $\nu(G)$, we have that this expression is bounded by some function of $\nu(G)$.

Let us show that the number of different ways of joining $B_{ij}$ edges between $T_j$ and $S_i$ is bounded by $B$, hence by $\nu(G)$. We will estimate the number of different configurations in terms of $B$. Observe that we can have at most $B$ sinks of $T_j$ that will be joined to a source from $S_i$. 
Moreover, the number of sources that will be incident to at least one edge is at most $B$, too. Let us count the number of different (not necessarily non-isomorphic) configurations. 
Let $t$ be any sink from $T_j$ that is joined with at least one of $B$ edges. 
Observe that it can be joined to some subset of $S_i$. 
Hence, the number of possibilities of $t$ is at most $2^B$. Recall that we have at most $B$ vertices in $T_j$. 
Thus, the total number of configurations is bounded by
\[
    \leq 2^B\cdot 2^B\cdot \ldots \cdot 2^B=2^{B^2}.
\]

This implies that the number of non-isomorphic ways of joining $B$ edges is bounded by a function of $\nu(G)$.
Thus, we can consider all of them and find the one maximizing the number of connected pairs. 
Clearly, the running-time of this simple algorithm will be FPT with respect to $\nu(G)$. 

Now, assume that $Q\neq \emptyset$. By Propositions~\ref{prop:PathProp} and~\ref{prop:OneNonTrivalSCC}, we have that the isolated vertices induce a path of length $k$ ($0\leq k\leq |Q|-1$). Moreover, the end-vertices of this path are joined to at most one vertex outside it. Consider the graph $G'=G\setminus Q$, and let $B'=B-(k+1)$. Observe that the added edges must form an optimal solution for this new instance. This allows us, to finish the proof by guessing $k$, that is, we try all possible values of $k=0,1,\ldots,|Q|-1$. For each fixed $k$, we observe that $G'$ contains no isolated vertex, hence we can solve this instance with the approach outlined above. By taking the one with maximum optimal number of pairs we will get an optimal solution for $G$. Observe that since there are $|Q|$ possible values of $k$, and for each $k$, $\nu(G')\leq \nu(G)$, we will have that the running-time of the algorithm will be FPT in terms of $\nu(G)$. The proof is complete.
\end{proof}

Observe that in any graph $G$, we have $\nu(G)\leq \tau(G)$, hence we have that the problem is FPT with respect to $\tau(G)$ as well (Proposition~\ref{prop:ReducingParameters}).


\section{Conclusion and future work}

We addressed the MCI problem from a parameterized tractability viewpoint and show hardness and algorithmic results on different natural parameters. Our results open several research directions. The main open problem regards the case of general directed graphs. A strictly more general case is that in which each vertex is associated with a weight and the objective function is the sum, for each $(u,v)\in V\times V$, of the product between the weight of $u$ and that of $v$, if $v$ is reachable from $u$~\cite{coroAlgosensor2018}. A more restricted open problem is to solve the case in which the graph is a directed tree with more than one source. It is worth to decrease the running time of some of the FPT algorithms in this paper, for example find an algorithm with single exponential running time in $\nu$. Another interesting problem that the parameterization with respect to $\nu$, hence $\tau$, suggests is the following: since our problem is FPT with respect to $\tau$, it is FPT with respect to $|V|-\alpha$ , where $\alpha$ is the maximum number of independent vertices of the graph. Thus, an interesting question is the parameterization with respect to $\alpha$. Finally, we would like to suggest the following question:

\begin{question}
\label{que:MaxSTminusB} Is our problem FPT with respect to $\max\{|S|, |T|\}-B$.
\end{question} It is not hard to see that a positive answer to this question will imply four of our results. That are, the parameterization with respect to $\max\{|S|, |T|\}$, $|V|-B$, $|V|^2-OPT$ and $OPT$.

\bibliographystyle{elsarticle-num}


\end{document}